\documentclass[12pt]{article}
\usepackage[letterpaper,margin=1in]{geometry}
\usepackage[T1]{fontenc}
\usepackage{lmodern}
\usepackage{amsmath,amssymb,amsthm,mathtools}
\usepackage{microtype}
\usepackage{enumitem}
\usepackage{needspace}
\usepackage{booktabs,tabularx}
\usepackage{float}
\usepackage[round,authoryear]{natbib}
\usepackage{xcolor}
\usepackage[colorlinks=true,linkcolor=blue!45!black,
  citecolor=blue!45!black,urlcolor=blue!45!black,
  pdftitle={Complementary Information Sources},
  pdfauthor={Zichang Wang}]{hyperref}
\setlist[enumerate]{leftmargin=2em,itemsep=0.3em,topsep=0.4em}

\newtheorem{theorem}{Theorem}
\newtheorem{lemma}{Lemma}
\newtheorem{proposition}{Proposition}
\newtheorem{corollary}{Corollary}
\newtheorem{claim}{Claim}
\theoremstyle{definition}
\newtheorem{definition}{Definition}
\newtheorem{example}{Example}
\theoremstyle{remark}

\newcommand{\IR}{\mathbb R}
\newcommand{\E}{\mathbb E}
\newcommand{\ind}{\mathbf 1}
\newcommand{\sB}{\mathcal B}
\newcommand{\sD}{\mathcal D}
\newcommand{\sE}{\mathcal E}
\newcommand{\sM}{\mathcal M}

\newcommand{\coapp}{\operatorname{co}^{\oplus}}
\newcommand{\clco}{\overline{\operatorname{co}}}
\newcommand{\supp}{\operatorname{supp}}
\newcommand{\TV}{\operatorname{TV}}

\title{Complementary Information Sources\footnote{I thank Jian Li, Xiao Lin, and audiences at ``Theory@Chapel Hill 2025'' for feedback and suggestions. All errors are my own.}}
\author{Zichang Wang\thanks{Wang Yanan Institute for Studies in Economics
and the School of Economics, Xiamen University.
Email: \href{mailto:zichang.wang262@gmail.com}{zichang.wang262@gmail.com}.}}
\date{September 28, 2026}

\begin{document}
\hypersetup{pageanchor=false}
\begin{titlepage}
\maketitle
\thispagestyle{empty}
\begin{abstract}
A decision maker may have several information sources available and choose which one to consult only after learning the decision problem she faces.  When is one such set of sources uniformly more valuable than another?  For unrestricted Bayesian decision problems, we show that the answer can be stated entirely in terms of Blackwell comparisons.  Form a tagged mixture by drawing a source independently of the state and revealing both its identity and its signal.  One source set is more valuable in every decision problem if and only if each tagged mixture of the second source set is Blackwell dominated by some tagged mixture of the first. The result applies to compact, possibly infinite source sets and general signal spaces. It also has an exact quantitative counterpart: the largest normalized value shortfall is the directed Le Cam deficiency between the source sets' tagged hulls. The analogous program for monotone decision problems reveals a boundary. We call the passage from problem-by-problem source-set superiority to a
problem-independent pairwise dominance a \emph{lifting}. The Blackwell lifting does not extend directly to the Lehmann order: mixing sources that individually satisfy the monotone likelihood ratio property (MLRP) need not preserve MLRP, and even when it does, no fixed Lehmann-dominating mixture need exist.  Requiring one source to serve a finite bundle of monotone decision problems restores the equivalence.
\end{abstract}
\vspace{0.8em}
\noindent\textbf{Keywords:} complementary information sources; Blackwell
order; Lehmann order; Le Cam deficiency; comparison of experiments.

\medskip
\noindent\textbf{JEL classification:} D81, D83.
\end{titlepage}
\pagenumbering{arabic}
\hypersetup{pageanchor=true}

\section{Introduction}
\label{sec:intro}

Different information sources are useful in different decision problems.  A credit bureau may be best for assessing default, an industry report for forecasting demand, and a technical audit for evaluating production risk.  A firm with access to all three need not combine their signals.  It may wait until the decision problem is known and then consult the source suited to that problem.  The same timing appears when a physician chooses a diagnostic test after the clinical question is clear, or when a policymaker selects a forecasting model after identifying the particular policy question the model must inform.

This paper studies the value of access to such a \emph{source set}. The decision maker knows her prior, actions, and payoffs before selecting one source from the source set.  She then observes that source's signal and acts.  A source set $P$ is more valuable than a source set $Q$ if it gives at least as high an expected payoff in every Bayesian decision problem. This is an upper-envelope comparison: the source chosen from $P$ may depend on the problem.  It is not a comparison of joint signals, so no correlation among the sources in a source set is needed.

The central question is whether this behavioral comparison has a statistical counterpart.  Blackwell's theorem answers the pairwise version: one source is more valuable than another in every decision problem exactly when the latter can be obtained by garbling the former.  With source sets, however, the quantifiers are different.  For every problem there may be a different useful source in $P$.  A statistical comparison, by contrast, must identify sources before the problem varies.  It is not immediate that the first statement contains enough uniformity to imply the second.

The relevant operation is simple.  Draw a source from a state-independent lottery and reveal the source's identity together with its signal.  We call the result a \emph{tagged mixture}.  The tag matters: it lets the decision maker interpret the signal using the experiment that generated it.  Expected value under a tagged mixture is therefore the weighted average of the component values.  The set of all tagged mixtures of $P$, denoted $\coapp(P)$, is its \emph{tagged hull}.

Our first main result is a source-set version of Blackwell's theorem.  A compact source set $P$ is more valuable than a compact source set $Q$ in every decision problem if and only if
\begin{equation*}
 \text{for every }\widetilde q\in\coapp(Q),\quad \text{some }\widetilde p\in\coapp(P) \text{ Blackwell dominates }\widetilde q.
 \tag{B}
\end{equation*}
The mixture $\widetilde p$ may depend on $\widetilde q$, but not on the decision problem.  For the leading case $P=\{p_1,p_2\}$ and $Q=\{q\}$, condition (B) says that a single weight $t$ exists such that the tagged mixture $t p_1\oplus(1-t)p_2$ Blackwell dominates $q$.  We call $p_1$ and $p_2$ \emph{complementary relative to $q$} when their source set dominates $q$ in this sense.  Neither source need dominate $q$ on its own.

A three-state example gives the basic idea.  Let $p_1$ reveal whether the state is the first state and let $p_2$ reveal whether it is the third.  Let $q$ choose each source with probability one half and disclose which source was chosen.  The source set $\{p_1,p_2\}$ is at least as valuable as $q$: after the problem is known, choosing the better of two sources yields at least their average value.  Yet $p_1$ is strictly worse than $q$ for a decision problem that rewards identifying the third state, while $p_2$ is strictly worse for the analogous problem about the first state.  The two sources are valuable as a portfolio of specializations.  The theorem's content is the converse: every pair that is uniformly complementary must admit exactly this kind of statistical certificate, for some lottery weight.

There is a useful geometric way to read the result.  An experiment determines the set of state-contingent action distributions that its decision rules can generate.  Expected utility is linear on this feasible set.  Choosing from a source set takes a union of feasible sets; testing every decision problem compares the closed convex hulls of those unions.  A tagged source lottery implements precisely that convexification.  For each fixed benchmark mixture, this yields a source lottery that works across all decision problems.  The argument extends the feasible-set perspective used in the proof of Blackwell's theorem by \citet{deOliveira2018}.

The same logic suggests a general program: begin with a pairwise information order and ask whether it \emph{lifts} to source sets.  The Blackwell order gives a positive answer, but the Lehmann order exposes what the argument needs.  We study decision problems with ordered states and scalar single-crossing payoffs. The pairwise comparison builds on \citet{Lehmann1988} and later work by \citet{Persico2000}, \citet{AtheyLevin2018}, and \citet{QuahStrulovici2009}. \citet{Kim2023} studies a broader class that permits multidimensional actions. For individual sources in the regular MLRP domain studied below, the Lehmann order is equivalent to higher value in every problem in our class.  Two difficulties appear at the source-set level.

First, a tagged mixture of sources satisfying MLRP need not itself satisfy MLRP. Signals from different sources can induce posterior beliefs that cannot be placed on one MLR chain.  We therefore identify an admissible domain in which all sources in the same source set have posterior supports on a common chain.  This restriction is economically transparent: although sources may differ in precision, they rank the possible signal realizations in a compatible way.  It also makes every tagged mixture satisfy MLRP.

Second, admissibility is not enough.  We construct a four-state counterexample in which, for every single monotone problem, one of two sources outperforms a benchmark, but no fixed lottery over the two sources outperforms the benchmark in all monotone problems. The counterexample remains valid if all decision problems are required to share the same uniform full-support prior. Variation in the prior is therefore not responsible for the failure. Indeed, no alternative pairwise relation between experiments can produce a universal lifting theorem on this domain.  The failure is not statistical; it comes from the decision class.  Two monotone problems need not be represented as one scalar-action monotone problem, so the value profiles generated by primitive problems need not be convex.

This negative result has an economic counterpart.  Choosing a source anew for each decision problem is different from choosing one source that must inform several decision problems. We therefore consider finite bundles of monotone problems.  The source is selected after the bundle is known, but it must serve every component; actions can still be tailored to their respective decision problems.  On this class, the source set comparison lifts exactly to the Lehmann order.  Bundling is thus both the mathematical convexification that restores the theorem and a substantive restriction on how an information source will be used.

Our final result makes the comparison quantitative.  Normalize utility to $[0,1]$ and ask for the largest payoff by which $Q$ can outperform $P$.  This worst-case shortfall equals the directed Le Cam deficiency from the tagged hull of $P$ to the tagged hull of $Q$.  Equivalently, it is the smallest $\varepsilon$ such that every tagged mixture from $Q$ can be reproduced from some tagged mixture of $P$ with state-by-state total-variation error at most $\varepsilon$.  The formula gives the behavioral loss and the statistical simulation error in the same units.

Table~\ref{tab:overview} summarizes the boundary of the lifting argument.
\begin{table}[H]
\centering
\caption{Pairwise information orders and source set comparisons}
\label{tab:overview}
\small
\renewcommand{\arraystretch}{1.16}
\begin{tabularx}{\textwidth}{@{}p{0.26\textwidth}p{0.22\textwidth}X@{}}
\toprule
Decision criterion & Pairwise comparison & Source set implication \\
\midrule
All Bayesian problems & Blackwell order & Exact lifting to tagged hulls \\
One monotone problem & Lehmann order & Lifting fails, even on the admissible domain \\
Finite bundles of monotone problems & Lehmann order & Exact lifting on the admissible domain \\
All $[0,1]$-valued problems & Le Cam deficiency & Exact formula for the maximal value shortfall \\
\bottomrule
\end{tabularx}
\end{table}
\paragraph{Relation to the Literature.} The paper builds on the equivalence among garbling, feasible decision rules, and universal value comparisons in \citet{Blackwell1951,Blackwell1953}; the proof strategy is closest in spirit to \citet{deOliveira2018}. \citet{Jakobsen2021} uses the same disclosed-mixture operation to formulate axioms for preferences over experiments. We extend this construction to arbitrary lotteries over compact source sets and general signal spaces, and use it as the convexification behind the source-set lifting results. For finite source sets, the Blackwell result is also obtainable from the general dominance--optimality analysis of \citet{ChengBorgers2026}: their result that an experiment not dominated by mixtures of the others is uniquely optimal in some decision problem yields the finite complementarity statement by contraposition. We use this observation as a starting point. Our focus is on the source set comparison itself, its extension to compact and possibly infinite families with general signal spaces, and what survives under ordered and approximate notions of informativeness.

Source sets are related to menus of experiments, but the economic question here is different. \citet{BergemannBonattiSmolin2018} study a seller who designs and prices a menu of experiments for a data buyer. In our model, the source sets are given, there are no prices, and the objective is to compare two sets across decision problems. Concurrent work by \citet{Song2026} also studies a menu that is fixed before a context is revealed. That paper considers repeated sampling. We instead study an exact one-shot comparison: after the decision problem is known, the decision maker selects one source and observes it once.

The distinction between choosing and aggregating sources is also important. \citet{deOliveiraIshiiLin2026} study a decision maker who observes several signals whose marginal experiments are known but whose correlation is not. For two states and two actions, their robust strategy uses only the best source; with more actions, it can assign different local comparisons to different specialist sources. \citet{BorgersHernandoKrahmer2013} study informational complements through the incremental value of jointly observed signals. \citet{LiangMu2020} study sequential acquisition from several sources. Although each researcher chooses one source at a time in their model, observations accumulate across periods. A complementary set is one whose sources are jointly needed for long-run identification. In our model, signals from different sources are never combined, either at one time or across time. Our use of ``complementary'' is relative to a benchmark and refers to the value of access to alternative sources across decision problems.

Sets of experiments have a different interpretation in \citet{Wang2024}, where they represent ambiguity about the information-generating mechanism and are evaluated pessimistically. A source set here contains known sources from which the decision maker chooses, and is therefore evaluated by an upper envelope. The ordered part is related to the literature on information in monotone decision problems. \citet{Lehmann1988} introduces the accuracy order, \citet{Persico2000} applies it to single-crossing problems, and \citet{AtheyLevin2018} and \citet{QuahStrulovici2009} provide broader analyses of monotone decision problems. The framework of \citet{AtheyLevin2018} can hold the prior fixed while payoffs vary. Our negative lifting result is not caused by prior variation: the two separating problems in the four-state example share the uniform prior. \citet{LiZhou2020} show that the Lehmann comparison is robust to uncertainty-averse preferences. \citet{Kim2023} allows multidimensional actions and characterizes pairwise comparisons through monotone quasi-garbling. The monotone-information results just cited concern pairwise comparisons. Our question is whether a pairwise order can be lifted to source sets, and which closure of the decision class is needed for such a result.

The approximate part uses the randomization criterion of \citet{LeCam1964}. A general treatment of comparison, randomization, and deficiency is given by \citet{Torgersen1991}. We apply this pairwise theory after taking tagged hulls and obtain an exact formula for the worst source-set value shortfall.

Section~\ref{sec:setup} introduces the model. Section~\ref{sec:blackwell} gives the Blackwell results. Section~\ref{sec:lehmann} studies monotone decision problems, first establishing the failure for primitive decision problems and then the positive theorem for decision problem bundles. Section~\ref{sec:lecam} develops the exact deficiency formula. Section~\ref{sec:conclusion} concludes.  The appendices contain technical details and a separate treatment of Lehmann's original loss formulation.

\section{The Model}
\label{sec:setup}
\noindent\textbf{Sources, decision problems, and garblings.} There are three stages.  First, the decision maker learns the decision problem---her prior, feasible actions, and payoffs.  Second, she selects one source from the source set available to her and observes its signal.  Third, she chooses an action.  Source selection cannot depend on the realized state. This timing makes a source set valuable because it offers specialization across decision problems, not because the signals in the source set can be combined.

Let $\Omega=\{\omega_1,\ldots,\omega_m\}$ be a finite state space, where $m\geq 2$. A \emph{standard Borel space} is a measurable space measurably isomorphic to a Borel subset of a complete separable metric space. For such a space $S$, write $\sB(S)$ for its sigma-algebra and $\Delta(S)$ for the probability measures on it. Finite and countable spaces, Euclidean spaces, and compact metric spaces are standard Borel.

An \emph{information source} is an experiment $p:\Omega\to\Delta(S)$ on a standard Borel signal space $S$. Conditional on state $\omega$, its signal has distribution $p(\cdot\mid\omega)$. Signal spaces may differ across sources.

Given two standard Borel spaces $S$ and $T$, a \emph{(Markov) kernel} is a mapping $K:S\to\Delta(T)$. It assigns a probability measure $K(\cdot\mid s)$ to each $s\in S$, with $s\mapsto K(B\mid s)$ measurable for every measurable $B\subseteq T$. Kernels describe randomized transformations of observations. Given kernels $K:S\to\Delta(T)$ and $L:T\to\Delta(R)$, their composition is the kernel $L\circ K:S\to\Delta(R)$ defined by
\[
 (L\circ K)(C\mid s)=\int_T L(C\mid t)K(dt\mid s).
\]
Let $q:\Omega\to\Delta(T)$ be another source. We write $p\trianglerighteq q$ if $q$ is a \emph{garbling} of $p$: there is a Markov kernel $K:S\to\Delta(T)$ such that
\[
 q(B\mid\omega)=\int_S K(B\mid s)p(ds\mid\omega)\quad\text{for all }\omega\in\Omega,\ B\in\sB(T).
\]
This is the Blackwell order. Sources are \emph{Blackwell equivalent} if each is a garbling of the other.

To avoid set-theoretic bookkeeping over all possible action spaces, fix the Hilbert cube $\mathcal A=[0,1]^{\mathbb N}$ as a grand action space. Every compact metric space is homeomorphic to a compact subset of $\mathcal A$; we therefore identify action spaces with such subsets. We also fix an order-preserving copy of $\IR$ in $\mathcal A$ for the monotone problems in Section~\ref{sec:lehmann}.

A \emph{decision problem} is $D=(A,u,\pi)$, where $A$ is a nonempty compact subset of $\mathcal A$, $u:\Omega\times A\to\IR$ is continuous in its action argument for each state, and $\pi\in\Delta(\Omega)$ is a prior. Let $\sD$ denote the set of these problems. In particular, finite action spaces are allowed. An \emph{action plan} for $p$ is a Markov kernel $\alpha:S\to\Delta(A)$. The source's ex-ante value is
\[
 U(p,D)=\sup_\alpha\sum_{\omega\in\Omega}\pi(\omega)\int_S\int_A u(\omega,a)\alpha(da\mid s)p(ds\mid\omega).
\]
Utilities are bounded because $\Omega$ is finite and $A$ is compact. An optimal measurable deterministic plan exists: conditional expected utility is measurable in the signal and continuous in the action, so a measurable maximizer can be selected on a compact metric action space. Randomized plans are retained to describe feasible state-dependent action distributions. For a collection $P$ of sources, write
\[
 U(P,D)=\sup_{p\in P}U(p,D).
\]
The supremum is a maximum under the compactness assumption below.  Thus the source set is evaluated after the decision problem is known.  For a class $\mathcal C$ of decision problems, write
\[
 P\succeq_{\mathcal C}Q\quad\Longleftrightarrow\quad U(P,D)\geq U(Q,D)\quad\text{for every }D\in\mathcal C.
\]
The unrestricted comparison uses $\mathcal C=\sD$.  Later sections vary $\mathcal C$ while keeping the timing of source selection fixed.

\subsection{Canonical Sources and Compact Collections}

Source sets may contain sources with different signal spaces, so compactness cannot be imposed by placing all probability matrices in one Euclidean space.  We instead identify Blackwell-equivalent sources with the same distribution of posterior beliefs.  This gives a common compact space of experiments and makes clear that the results do not depend on how signals are labeled.  Readers interested only in finite sources and finite signal spaces may view this subsection as the usual posterior representation of an experiment.

Fix the full-support reference prior $\bar\pi(\omega)=1/m$ and let $X=\Delta(\Omega)$.  Under this prior, a source $p$ induces a distribution $\tau_p$ over posterior beliefs in $X$.  Its mean is $\bar\pi$.  Conversely, any probability measure $\tau$ on $X$ with that mean defines the canonical
experiment
\[
 \widehat p_\tau(B\mid\omega)=\frac{1}{\bar\pi(\omega)}\int_B\gamma(\omega)\tau(d\gamma).
\]
This formula applies to every $B\in\sB(X)$.

\begin{lemma}[Canonical representation]
\label{lem:canonical}
Every source $p$ is Blackwell equivalent to $\widehat p_{\tau_p}$. Two sources are Blackwell equivalent if and only if their reference posterior distributions coincide.
\end{lemma}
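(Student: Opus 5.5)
We would prove the lemma by exhibiting explicit garblings in both directions between $p$ and its canonical experiment, and then using the value comparison of Blackwell's theorem, in its easy direction only, to pin down the posterior distribution.

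\textbf{Posterior map.} Let $\mu_p=\sum_\omega \bar\pi(\omega)p(\cdot\mid\omega)$ be the reference marginal on $S$. Each $p(\cdot\mid\omega)$ is absolutely continuous with respect to $\mu_p$, since $\bar\pi(\omega)=1/m>0$. Put $f_\omega=dp(\cdot\mid\omega)/d\mu_p$, choosing versions with $f_\omega\ge 0$ and $\sum_\omega \bar\pi(\omega)f_\omega=1$ everywhere. Define the measurable posterior map
\[
 \gamma_p(s)=\bigl(\bar\pi(\omega)f_\omega(s)\bigr)_{\omega\in\Omega}\in X .
\]
Then $\tau_p=\mu_p\circ\gamma_p^{-1}$, and its mean is $\bar\pi$.

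\textbf{$\widehat p_{\tau_p}$ is a garbling of $p$.} Use the deterministic kernel $K(\cdot\mid s)=\delta_{\gamma_p(s)}$. For $B\in\sB(X)$,
\[
 \int_S K(B\mid s)\,p(ds\mid\omega)=\int_{\gamma_p^{-1}(B)} f_\omega\,d\mu_p=\frac{1}{\bar\pi(\omega)}\int_{\gamma_p^{-1}(B)}\gamma_p(s)(\omega)\,\mu_p(ds),
\]
and a change of variables turns the right-hand side into $\widehat p_{\tau_p}(B\mid\omega)$.

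\textbf{$p$ is a garbling of $\widehat p_{\tau_p}$.} Because $S$ is standard Borel, the joint law of $(s,\gamma_p(s))$ under $\mu_p$ admits a regular conditional distribution $L:X\to\Delta(S)$ of $s$ given $\gamma_p(s)=\gamma$. Then
\[
 \int_X L(B\mid\gamma)\,\widehat p_{\tau_p}(d\gamma\mid\omega)=\frac{1}{\bar\pi(\omega)}\int_X L(B\mid\gamma)\,\gamma(\omega)\,\tau_p(d\gamma)=\frac{1}{\bar\pi(\omega)}\int_S \ind_B(s)\,\gamma_p(s)(\omega)\,\mu_p(ds)=p(B\mid\omega).
\]
The middle equality holds because $\gamma(\omega)$ is a function of the conditioning variable, so it can be pulled inside the conditional expectation. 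This disintegration is the step I expect to be the main technical obstacle. It is where standard Borelness of $S$ is used, and the key point is that $\gamma_p$ is a sufficient statistic. Combining the two garblings proves the first claim.

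\textbf{Second claim.} If $\tau_p=\tau_q$, then $p\sim\widehat p_{\tau_p}=\widehat p_{\tau_q}\sim q$ by the first claim and transitivity of garbling, since composing kernels gives a kernel. For the converse, suppose $p$ and $q$ are Blackwell equivalent. Then their canonical experiments $\widehat p_{\tau_p}$ and $\widehat p_{\tau_q}$ are equivalent, so they have equal values $U(\cdot,D)$ in every problem; this is the easy direction of Blackwell, obtained by composing action plans with the kernel.

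Take problems with prior $\bar\pi$, $A=\Delta(\Omega)$, and the payoff $u(\omega,a)=\sum_{\omega'}a(\omega')\,c(\omega')-\lambda\|a-e_\omega\|^2$. More cleanly, take piecewise-linear payoffs with finite actions. Their values are
\[
 U=\int_X\max_{a\in A}\sum_{\omega} u(\omega,a)\,\gamma(\omega)\,\tau(d\gamma).
\]
As $A$ and $u$ range over finite action sets, the integrands $\max_{a\in A}\sum_\omega u(\omega,a)\gamma(\omega)$ range over all convex piecewise-linear functions on $X$. Their differences are dense in $C(X)$ by lattice Stone--Weierstrass: they form a vector lattice containing the linear functions and separating points. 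Hence $\int g\,d\tau_p=\int g\,d\tau_q$ for every continuous $g$ on $X$, so $\tau_p=\tau_q$.
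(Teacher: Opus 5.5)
Your proof is correct. The first claim follows the paper's route exactly: the deterministic posterior map garbles $p$ into $\widehat p_{\tau_p}$, and a regular conditional distribution of the signal given its posterior, which is where standard Borelness enters, garbles back. The two proofs differ on the ``only if'' half of the second claim.

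The paper argues directly from the garbling kernel. Generate $(\omega,s,t)$ under $\bar\pi$. The posterior after $t$ is the conditional expectation of the posterior after $s$, so $\int\|\gamma\|^2\,d\tau_p-\int\|\gamma\|^2\,d\tau_q=\E\|B-C\|^2\ge 0$. A reverse garbling forces equality, hence $B=C$ almost surely and $\tau_p=\tau_q$.

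You instead pass through values, in three steps:
\begin{enumerate}[label=(\roman*)]
\item Equivalence gives equal values in every problem.
\item The canonical value formula turns these into equal integrals of all finite maxima of affine functions. There is no circularity with \eqref{eq:posteriorvalue} here: for $\widehat p_\tau$ the reference posterior is the signal itself, so the formula is immediate.
\item Lattice Stone--Weierstrass extends the equality to all of $C(X)$.
\end{enumerate}
The closure of differences of max-affine functions under $\max$, which you assert without proof, does hold: $\max(f_1-g_1,f_2-g_2)=\max(f_1+g_2,f_2+g_1)-(g_1+g_2)$.

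The paper's argument is shorter, uses a single test function and no approximation theorem. It also shows that a one-way garbling preserving the second moment of posteriors is already an equivalence. Your argument shows directly that values in finite-action problems at the reference prior determine $\tau$, without the separation step of Lemma~\ref{lem:blackwell}. This fits the paper's decision-theoretic viewpoint, at the cost of invoking Stone--Weierstrass.

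Delete the abandoned quadratic-payoff sentence, since only the finite-action problems are used.
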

The construction and proof, including the measure-theoretic details for standard Borel signals, are given in Appendix~\ref{app:canonical}.

Accordingly, the space of equivalence classes can be identified with
\[
 \sE=\left\{\tau\in\Delta(X)\mid\int_X\gamma\,\tau(d\gamma)=\bar\pi\right\}.
\]
We equip this space with \emph{weak convergence}: $\tau_n\to\tau$ means $\int f\,d\tau_n\to\int f\,d\tau$ for every continuous real function $f$ on $X$. Probability measures on a compact metric space form a compact metrizable space in this topology. The mean restriction displayed above is closed, so $\sE$ is compact metric. We continue to write $p$ for a source or its equivalence class, and write $\tau_p$ when integrating over beliefs. A \emph{source set} is a nonempty compact subset of $\sE$. Compactness guarantees that an optimal source is available rather than merely approached by a sequence of sources.

For any decision problem $D=(A,u,\pi)\in\sD$, define
\[
 \phi_D(\gamma)=\max_{a\in A}\sum_{\omega\in\Omega}\frac{\pi(\omega)}{\bar\pi(\omega)}\gamma(\omega)u(\omega,a),
\]
then the value of source $p$ for this problem $D$ can be written as
\begin{equation}
 U(p,D)=\int_X\phi_D(\gamma)\tau_p(d\gamma).
 \label{eq:posteriorvalue}
\end{equation}
The value formula follows by optimizing after each canonical signal. The function $\phi_D$ is continuous and convex: it is a maximum of linear functions whose coefficients range over a compact set. Consequently $p\mapsto U(p,D)$ is continuous on $\sE$, and the supremum defining $U(P,D)$ is attained whenever $P$ is nonempty and compact.

\subsection{Tagged Mixtures}

Let $P\subseteq\sE$ be nonempty and compact. A source lottery $\xi\in\Delta(P)$ is a Borel probability distribution over $P$, independent of the state. The associated \emph{tagged mixture} $p_\xi$ has signal space $P\times X$ and law
\[
 p_\xi(C\mid\omega)=\int_P\int_X\ind_C(r,\gamma)\widehat p_{\tau_r}(d\gamma\mid\omega)\xi(dr).
\]
The source label $r$ is observed. For finite source sets, this is the matrix-appending construction used by \citet{Jakobsen2021}; our formulation allows arbitrary lotteries over a compact source set and general signal spaces. This is not the ordinary mixture that first forgets which probability law generated the signal. The tag lets the decision maker use a source-specific action rule and is exactly what makes value affine in the lottery. The integrand is measurable: integration of a bounded Borel function against a varying probability measure is Borel measurable, and the canonical kernel is given by the formula above. We use $p_\xi$ for both the tagged experiment and its Blackwell-equivalence class in $\sE$, represented canonically by its posterior distribution $\tau_{p_\xi}$. Define the \emph{tagged hull}
\[
 \coapp(P)=\{p_\xi\mid\xi\in\Delta(P)\}\subseteq\sE.
\]
Thus the elements of $\coapp(P)$ are Blackwell-equivalence classes of experiments---equivalently, their canonical posterior distributions. For $P=\{p_1,p_2\}$, write $p_t=t p_1\oplus(1-t)p_2$. For finite signal spaces, tagged mixing concatenates the probability matrices after multiplying each block by its lottery weight.

\begin{lemma}[Value and convexity of tagged mixtures]
\label{lem:append}
For every $D\in\sD$ and $\xi\in\Delta(P)$, the posterior distribution of
the mixture is $\tau_{p_\xi}=\int_P\tau_r\,\xi(dr)$. 
Its value is
\[
 U(p_\xi,D)=\int_P U(r,D)\xi(dr).
\]
The first identity means equality of the measures on every Borel subset of $X$. The set $\coapp(P)$ is compact and convex and equals the weakly closed convex hull of $P$ in $\sE$. In particular, $U(\coapp(P),D)=U(P,D)$.
\end{lemma}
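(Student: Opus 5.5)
The proof proceeds in four steps: the posterior identity, the value formula, identification of $\coapp(P)$ with the closed convex hull of $P$, and the value equality.

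\emph{Posterior identity.} Under the reference prior $\bar\pi$, the joint law of $(\omega,r,\gamma)$ generated by $p_\xi$ is $\bar\pi(\omega)\,\widehat p_{\tau_r}(d\gamma\mid\omega)\,\xi(dr)$. By the definition of $\widehat p_\tau$, this equals $\gamma(\omega)\,\tau_r(d\gamma)\,\xi(dr)$. Two facts follow:
\begin{itemize}
\item[(i)] the marginal of the tagged signal $(r,\gamma)$ is $\tau_r(d\gamma)\,\xi(dr)$;
\item[(ii)] the conditional law of $\omega$ given $(r,\gamma)$ is $\gamma$ itself.
\end{itemize}
So the reference posterior after observing $(r,\gamma)$ is simply $\gamma$, and the tag carries no further information about the state once $\gamma$ is known. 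Pushing the signal distribution forward to $X$ gives $\tau_{p_\xi}(B)=\int_P\tau_r(B)\,\xi(dr)$ for every Borel $B\subseteq X$. Measurability of $r\mapsto\tau_r(B)$ holds because $P\subseteq\sE$ carries the weak topology, in which $\tau\mapsto\tau(B)$ is Borel.

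\emph{Value formula.} By \eqref{eq:posteriorvalue} and Lemma~\ref{lem:canonical}, $U(p_\xi,D)=\int_X\phi_D\,d\tau_{p_\xi}$. By the posterior identity and Fubini (since $\phi_D$ is bounded and continuous), this equals $\int_P\int_X\phi_D\,d\tau_r\,\xi(dr)=\int_P U(r,D)\,\xi(dr)$.

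\emph{Hull identification; this is the main step.} The map $\Phi:\Delta(P)\to\sE$, $\xi\mapsto\int\tau_r\,\xi(dr)$, is affine, so its image $\coapp(P)$ is convex. Continuity of $\Phi$ follows because, for continuous $f$ on $X$, the map $r\mapsto\int f\,d\tau_r$ is continuous on $P$; hence $\xi\mapsto\int\!\!\int f\,d\tau_r\,d\xi$ is weakly continuous. Since $\Delta(P)$ is compact, the image is compact, so $\coapp(P)$ is compact and convex. It contains $P$ via Dirac lotteries, so it contains $\clco(P)$. For the reverse inclusion, approximate an arbitrary $\xi$ weakly by finitely supported lotteries (which are dense in $\Delta(P)$). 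Their images lie in $\operatorname{co}(P)$, and continuity of $\Phi$ places $\Phi(\xi)$ in $\clco(P)$. Here convex hulls are taken in the locally convex space of signed measures on $X$ with the weak topology, in which $\sE$ is a closed convex subset.

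\emph{Value equality.} Since $P\subseteq\coapp(P)$, we have $U(\coapp(P),D)\ge U(P,D)$. Conversely, the value formula gives $U(p_\xi,D)=\int U(r,D)\,d\xi\le\sup_{r\in P}U(r,D)$ for every $\xi$, so $U(\coapp(P),D)\le U(P,D)$.
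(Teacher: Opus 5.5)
Your proof is correct and follows essentially the same route as the paper. The paper also obtains the posterior identity from the joint law $\xi(dr)\tau_r(d\gamma)$ with posterior $\gamma$, gets the value formula by integrating $\phi_D$, and identifies the hull using the continuous affine map $\xi\mapsto\tau_{p_\xi}$ on the compact convex set $\Delta(P)$ together with weak density of finitely supported lotteries. The one step you only assert is that density; the paper justifies it by partitioning $P$ into finitely many Borel sets of small diameter and moving each set's mass to a representative point.
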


\begin{proof}
Under $\bar\pi$, the joint law of $(r,\gamma)$ is $\xi(dr)\tau_r(d\gamma)$, and its posterior is $\gamma$. This proves the first identity. The second follows from \eqref{eq:posteriorvalue} and iterated integration. For every continuous $f$ on $X$,
\[
 \int_X f\,d\tau_{p_\xi}=\int_P\left(\int_X f\,d\tau_r\right)\xi(dr).
\]
The inner integral is continuous in $r$, so the map $\xi\mapsto\tau_{p_\xi}$ is continuous and affine. Its image is compact and convex because $\Delta(P)$ is compact and convex. Finitely supported probabilities are weakly dense in $\Delta(P)$: partition the compact metric space into finitely many Borel sets of arbitrarily small diameter and move each set's mass to one representative. Their images are precisely the finite convex combinations of members of $P$. This proves the hull identity. The value identity gives the last assertion.
\end{proof}

\subsection{Feasible Action Distributions}

Following a construction similar to that in \citet{deOliveira2018}, for a compact action space $A$, the distributions of actions that can be implemented state by state are
\[
 \Lambda_p(A)=\left\{\lambda\in\Delta(A)^\Omega\mid\lambda(B\mid\omega)=\int_S\alpha(B\mid s)p(ds\mid\omega)\text{ for some action plan }\alpha\right\}.
\]
Set $\Lambda_P(A)=\bigcup_{p\in P}\Lambda_p(A)$. The superscript $\Omega$ denotes one probability measure for each state. We use the product weak topology on $\Delta(A)^\Omega$. The notation $\clco$ denotes the closed convex hull in that topology: the smallest closed convex set containing the specified set.

\subsection{Finite Approximations and Bundles of Problems}

Two elementary properties of decision problems are central to the analysis. We first define the operation used in the second one. Given $D_k=(A_k,u_k,\pi_k)\in\sD$, $k=1,\ldots,n$, and $\theta\in\Delta(\{1,\ldots,n\})$, let $A^*=\prod_k A_k$, identified with a compact subset of $\mathcal A$, and let $\pi^*=\sum_k\theta_k\pi_k$. When $\pi^*(\omega)>0$, define
\[
 u^*(\omega,(a_1,\ldots,a_n))=\sum_k\frac{\theta_k\pi_k(\omega)}{\pi^*(\omega)}u_k(\omega,a_k).
\]
Set $u^*(\omega,\cdot)=0$ when $\pi^*(\omega)=0$. We call the resulting problem $D^*=(A^*,u^*,\pi^*)$ the \emph{weighted direct sum} and write $D^*=\bigoplus_k\theta_kD_k$.

\begin{lemma}[Finite approximation and direct sums]
\label{lem:decisions}
\begin{enumerate}[label=(\roman*)]
\item For every $D=(A,u,\pi)\in\sD$ and $h>0$, there is a finite
      $A_h\subseteq A$ such that, for $D_h=(A_h,u|_{\Omega\times A_h},\pi)$,
      \begin{equation}
       0\leq U(p,D)-U(p,D_h)\leq h
       \quad\text{for every source }p.
       \label{eq:finiteapprox}
      \end{equation}
\item Given $D_k=(A_k,u_k,\pi_k)\in\sD$, $k=1,\ldots,n$, and
      $\theta\in\Delta(\{1,\ldots,n\})$, the weighted direct sum
      $D^*=\bigoplus_k\theta_k D_k$ belongs to $\sD$ and satisfies
      \begin{equation}
       U(p,D^*)=\sum_k\theta_k U(p,D_k)
       \quad\text{for every }p.
       \label{eq:directsum}
      \end{equation}
      If all the $u_k$ take values in $[0,1]$, so does the utility
      in $D^*$.
\end{enumerate}
\end{lemma}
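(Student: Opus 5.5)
For part (i), I would work with the posterior representation \eqref{eq:posteriorvalue}, since it turns a statement about every source into a uniform statement about beliefs. Let $\phi_{D_h}$ be defined like $\phi_D$ but with the maximum taken over $A_h$. The lower bound $U(p,D_h)\le U(p,D)$ is immediate, because $A_h\subseteq A$ gives $\phi_{D_h}\le\phi_D$ pointwise. For the upper bound, I will make $0\le\phi_D-\phi_{D_h}\le h$ hold uniformly on $X$ and then integrate against $\tau_p$, which has total mass one.

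To get the uniform bound, I use that $\Omega$ is finite and each $u(\omega,\cdot)$ is continuous on the compact metric space $A$. Hence each $u(\omega,\cdot)$ is uniformly continuous, and there is a $\delta>0$ with $|u(\omega,a)-u(\omega,a')|\le h$ for all $\omega$ whenever $d(a,a')<\delta$. Compactness then gives a finite $\delta$-net $A_h\subseteq A$. Now fix $\gamma\in X$ and a maximizer $a$ of the linear objective defining $\phi_D(\gamma)$, and pick $a'\in A_h$ within $\delta$ of $a$. Then
\[
\phi_D(\gamma)-\phi_{D_h}(\gamma)\le\sum_\omega\frac{\pi(\omega)}{\bar\pi(\omega)}\gamma(\omega)\,\bigl|u(\omega,a)-u(\omega,a')\bigr|\le h\sum_\omega\frac{\pi(\omega)}{\bar\pi(\omega)}\gamma(\omega).
\]
The last sum is not identically one. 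It is bounded by $\max_\omega\pi(\omega)/\bar\pi(\omega)\le m$, so I will run the argument with $h/m$ in place of $h$. After integrating against $\tau_p$, the bound no longer depends on $p$, because every $\tau_p$ has mass one. (Alternatively, note that $\int\sum_\omega\frac{\pi}{\bar\pi}\gamma\,d\tau_p=1$ because $\tau_p$ has mean $\bar\pi$, which gives exactly $h$.) I expect this to be the only real point in part (i): the constant must come from the prior weights, not from the source.

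For part (ii), I first check that $D^*\in\sD$. The product $A^*=\prod_k A_k$ is compact, and the fixed identification of compact metric spaces with subsets of $\mathcal A$ handles the embedding. The utility $u^*(\omega,\cdot)$ is a finite weighted sum of functions that are continuous in their own coordinates, so it is continuous. The weights $\theta_k\pi_k(\omega)/\pi^*(\omega)$ are nonnegative and sum to one, so $u^*$ stays in $[0,1]$ whenever every $u_k$ does. Since $\pi^*$ is a convex combination of priors, it is a prior.

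For the value identity, I again compute in posterior form. Because $\pi^*(\omega)u^*(\omega,a)=\sum_k\theta_k\pi_k(\omega)u_k(\omega,a_k)$ for every state (both sides vanish when $\pi^*(\omega)=0$), the objective in $\phi_{D^*}(\gamma)$ separates as
\[
\sum_k\theta_k\sum_\omega\frac{\pi_k(\omega)}{\bar\pi(\omega)}\gamma(\omega)u_k(\omega,a_k).
\]
Each term depends only on its own coordinate $a_k$, and the action set is a product. Maximizing coordinatewise therefore gives $\phi_{D^*}=\sum_k\theta_k\phi_{D_k}$. Integrating against $\tau_p$ via \eqref{eq:posteriorvalue} yields \eqref{eq:directsum}.
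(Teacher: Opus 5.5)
Your proposal is correct and follows essentially the paper's argument for part (i). The paper likewise builds a finite $A_h$ that approximates every action within $h$ in every state, though it does so by covering the compact image of $a\mapsto(u(\omega,a))_{\omega\in\Omega}$ with finitely many max-norm balls rather than through uniform continuity and a $\delta$-net in $A$. It then optimizes in \eqref{eq:posteriorvalue} and uses exactly your ``alternative'' observation that the integrated state weights sum to one, so the $h/m$ rescaling is unnecessary.

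Part (ii) matches the paper's coordinatewise separation of the maximization, written out in posterior form, and you handle the states with $\pi^*(\omega)=0$ correctly.
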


\begin{proof}
Consider the continuous map that assigns to each action its vector of state-contingent utilities, $a\mapsto(u(\omega,a))_{\omega\in\Omega}\in\IR^m$. Its image is compact. It can therefore be covered by finitely many balls of radius $h$, centered at points of the image, in the maximum norm, where the distance between two vectors is the largest absolute coordinate difference. Retain an action corresponding to each center and collect these actions in $A_h$. For every $a\in A$, some $a_h\in A_h$ then satisfies $|u(\omega,a)-u(\omega,a_h)|\leq h$ in every state $\omega$. Optimizing in \eqref{eq:posteriorvalue} and integrating proves \eqref{eq:finiteapprox}; the integrated state weights sum to one.

For the second assertion, use the weighted direct sum defined before the lemma. The product action space is compact metric and the utility is continuous. Conditional maximization separates across the coordinates $a_k$, which proves \eqref{eq:directsum}. In every state with positive probability, $u^*$ is a convex combination of the component utilities, proving the normalization claim.
\end{proof}

Part~(ii) has a direct interpretation.  The product action chooses one action for each component problem, and the same signal can be used in every component.  Thus $D^*$ is a weighted bundle of decision problems to be served by one source.  Unrestricted Bayesian problems are closed under this operation.  The closure looks innocuous here, but it will be exactly what distinguishes the positive and negative results for monotone problems.

It follows from the uniform bound \eqref{eq:finiteapprox} that testing all finite-action problems is equivalent to testing all of $\sD$ for comparisons of the form $U(P,D)\geq U(Q,D)-c$, with a fixed tolerance $c$. The same approximation applies to a restricted decision class whenever restricting to a finite subset of actions preserves its utility restrictions.

\section{Blackwell Comparisons}
\label{sec:blackwell}
We start with preliminaries about Blackwell comparisons of two information sources.
\begin{lemma}[Pairwise Blackwell comparison]
\label{lem:blackwell}
For any sources $p,q$, the following are equivalent:
\begin{enumerate}[label=(\roman*)]
\item $p\trianglerighteq q$.
\item $\Lambda_p(A)\supseteq\Lambda_q(A)$ for every nonempty compact
      metric action space $A$.
\item $U(p,D)\geq U(q,D)$ for every $D\in\sD$.
\end{enumerate}
\end{lemma}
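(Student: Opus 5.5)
I will prove the cycle (i)$\Rightarrow$(ii)$\Rightarrow$(iii)$\Rightarrow$(i), following the feasible-set approach of \citet{deOliveira2018}.

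\emph{(i)$\Rightarrow$(ii).} If $q$ is a garbling of $p$ through $K:S\to\Delta(T)$ and $\beta:T\to\Delta(A)$ is any action plan for $q$, then the composition $\alpha=\beta\circ K$ is a Markov kernel $S\to\Delta(A)$. By Fubini it implements the same state-contingent action distribution:
\[
\int_S\alpha(B\mid s)\,p(ds\mid\omega)=\int_T\beta(B\mid t)\,q(dt\mid\omega).
\]
Hence $\Lambda_q(A)\subseteq\Lambda_p(A)$.

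\emph{(ii)$\Rightarrow$(iii).} For $D=(A,u,\pi)$, the value is
\[
U(p,D)=\sup_{\lambda\in\Lambda_p(A)}\sum_\omega\pi(\omega)\int_A u(\omega,a)\,\lambda(da\mid\omega),
\]
so enlarging the feasible set can only raise the supremum.

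\emph{(iii)$\Rightarrow$(i).} This is the substantive step, and I will reduce it to the canonical representation. By Lemma~\ref{lem:canonical}, I may take $p=\widehat p_{\tau_p}$ and $q=\widehat p_{\tau_q}$ on the signal space $X=\Delta(\Omega)$.
\begin{enumerate}
\item \emph{Convex order.} Choose the prior $\pi=\bar\pi$. For a finite action set, $\phi_D$ is a maximum of finitely many affine functions of $\gamma$. Every continuous convex function on the simplex is a uniform limit of such maxima, and \eqref{eq:posteriorvalue} lets me pass to the limit. Thus (iii) gives
\[
\int f\,d\tau_p\geq\int f\,d\tau_q\quad\text{for every continuous convex } f \text{ on } X.
\]
\item \emph{Dilation.} By Strassen's theorem on the compact convex set $X$, there is a Markov kernel $M:X\to\Delta(X)$ with barycenter $\int\gamma'\,M(d\gamma'\mid\gamma)=\gamma$ such that $\tau_p=\int M(\cdot\mid\gamma)\,\tau_q(d\gamma)$. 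Disintegrating the resulting joint law of $(\gamma,\gamma')$ in the reverse direction yields a kernel $K:X\to\Delta(X)$ that sends the posterior $\gamma'$ of $p$ to $\gamma$. This is the mean-preserving spread.
\item \emph{Garbling.} It remains to verify that $K$ garbles $\widehat p_{\tau_p}$ into $\widehat p_{\tau_q}$. By the formula for $\widehat p$, this amounts to
\[
\int_X \gamma'(\omega)\,K(B\mid\gamma')\,\tau_p(d\gamma')=\int_B\gamma(\omega)\,\tau_q(d\gamma)
\quad\text{for every } B \text{ and } \omega.
\]
This follows from the martingale property $\E[\gamma'\mid\gamma]=\gamma$ under the joint law.
\end{enumerate}

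An alternative to Strassen is a separation argument. One first shows that $\Lambda_p(A)$ is convex and closed for finite $A$; closedness holds because decision rules form a weak-$*$ compact set. One then separates a putative $\lambda\in\Lambda_q\setminus\Lambda_p$ by a utility, giving (iii)$\Rightarrow$(ii). Finally, one takes $A$ to be a finite partition of $X$ and builds the garbling by a limiting argument.

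\emph{Main obstacle.} I expect the hardest step to be the measure-theoretic passage from the convex-order inequality to an actual kernel on general standard Borel spaces, namely invoking Strassen and checking that the reversed disintegration really garbles the canonical experiments. The reduction to posterior distributions via Lemma~\ref{lem:canonical} is what keeps this manageable.
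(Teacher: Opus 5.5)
Your proof is correct. Your (i)$\Rightarrow$(ii) and (ii)$\Rightarrow$(iii) are exactly the paper's, but your (iii)$\Rightarrow$(i) follows the classical Blackwell--Sherman--Stein route rather than the paper's.

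You use finite-action problems under $\bar\pi$ to place $\tau_q$ below $\tau_p$ in the convex order. Your approximation step is sound: a continuous convex function on the simplex is the supremum of its affine minorants, and Dini's theorem makes the finite maxima converge uniformly. You then invoke Strassen's theorem for a martingale coupling. Your check that the reversed kernel is a garbling, via $\E[\gamma'(\omega)\mid\gamma]=\gamma(\omega)$, is right.

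The paper never passes through convex order. It takes the single action space $A=X$ and notes that the identity plan under canonical $q$ implements the point $(q(\cdot\mid\omega))_{\omega}$. A plan under canonical $p$ implementing this point is literally a garbling kernel, so failure of (i) means the point lies outside $\Lambda_p(X)$. That set is compact and convex, being the continuous affine image of the joint laws on $X\times X$ with first marginal $\tau_p$. A weakly continuous functional $\sum_\omega\int v_\omega\,d\lambda(\cdot\mid\omega)$ therefore separates the point from it. The utility $v_\omega/\bar\pi(\omega)$ under prior $\bar\pi$ then gives a problem in which $q$ strictly beats $p$, and Lemma~\ref{lem:decisions}(i) turns this into a finite-action gap.

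What each route buys:
\begin{itemize}
\item The paper's argument needs only Hahn--Banach and weak compactness. It shows as a by-product that feasible-set inclusion at the one action space $A=X$ already forces (i). It also uses the same feasible-set geometry that Lemma~\ref{lem:geometry} and Theorem~\ref{thm:sets} reuse for source sets.
\item Your argument outsources the hard step to Strassen's theorem, which is itself a separation result. In return, it constructs the garbling explicitly as a reversed mean-preserving spread and works purely with posterior laws, with no topology on action distributions.
\end{itemize}

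The alternative you sketch at the end, via finite partitions and a limiting argument, is unnecessary in this framework. Compact infinite action spaces such as $X$ are allowed, so the garbling kernel is itself an action plan.
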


\begin{proof}
The finite version is Theorem~1 of \citet{deOliveira2018}.

If $q=K\circ p$, a plan $\alpha$ for $q$ can be implemented under $p$ as $\alpha\circ K$. Kernel composition is associative by iterated integration. This proves~(i)$\Rightarrow$(ii), while (ii)$\Rightarrow$(iii) follows by maximizing the same linear payoff over the two feasible sets.

For~(iii)$\Rightarrow$(i), use canonical signals and set $A=X$. The identity plan under $q$ implements the vector $q=(q(\cdot\mid\omega))_{\omega\in\Omega}$. If $p$ does not garble into $q$, this vector is outside $\Lambda_p(X)$. This set is compact and convex: it is the continuous affine image of the compact convex set of joint laws $\zeta\in\Delta(X\times X)$ with first marginal $\tau_p$, under $\zeta\mapsto\bigl(\bar\pi(\omega)^{-1}\int_{X\times\cdot}\gamma(\omega)\,\zeta(d\gamma,da)\bigr)_{\omega\in\Omega}$. Separation in the weak topology gives continuous functions $v_\omega:X\to\IR$ such that
\[
 \sum_\omega\int_X v_\omega(a)q(da\mid\omega)
 >
 \sup_{\lambda\in\Lambda_p(X)}\sum_\omega\int_X v_\omega(a)\lambda(da\mid\omega).
\]
Here a weakly continuous linear functional is a finite sum of integrals against continuous functions, which explains the form of the separator. For the problem with prior $\bar\pi$ and utility $u(\omega,a)=v_\omega(a)/\bar\pi(\omega)$, the left side is the payoff of the identity plan under $q$, and the right side is $U(p,D)$. Hence $U(q,D)>U(p,D)$. This proves~(iii)$\Rightarrow$(i) and also shows that feasible-plan inclusion at $A=X$ suffices for~(i). Finally, Lemma~\ref{lem:decisions}(i) replaces a strict compact-action value gap by a strict finite-action gap.
\end{proof}

\subsection{Two Sources Relative to One Benchmark}

We say that $p_1$ and $p_2$ are \emph{complementary relative to $q$} if
\[
 \max\{U(p_1,D),U(p_2,D)\}\geq U(q,D)\quad\text{for every }D\in\sD.
\]
This is a weak comparison. In particular, it includes the case in which one source by itself dominates $q$.
\begin{theorem}[Two-source complementarity]
\label{thm:pair}
For any three information sources $p_1,p_2,q$, the following statements are equivalent:
\begin{enumerate}[label=(\roman*)]
\item $p_1$ and $p_2$ are complementary relative to $q$.
\item There exists $t\in[0,1]$ such that $t p_1\oplus(1-t)p_2\trianglerighteq q.$
\end{enumerate}
The same equivalence holds if condition~(i) is tested only on finite-action problems. The lottery weight in~(ii) is independent of the decision problem.
\end{theorem}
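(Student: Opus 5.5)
The direction (ii)$\Rightarrow$(i) is immediate. If $p_t=t p_1\oplus(1-t)p_2\trianglerighteq q$, then Lemma~\ref{lem:blackwell} gives $U(q,D)\le U(p_t,D)$ for every $D\in\sD$. Lemma~\ref{lem:append} gives $U(p_t,D)=tU(p_1,D)+(1-t)U(p_2,D)$, and this is at most $\max\{U(p_1,D),U(p_2,D)\}$. The same chain holds a fortiori when (i) is tested only on finite-action problems. So it suffices to show that (i), tested only on finite-action problems, implies (ii).

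For the converse, argue by contradiction with a minimax step. Suppose no $t$ works, so for each $t\in[0,1]$ we have $p_t\not\trianglerighteq q$. Define the payoff
\[
F(t,D)=tU(p_1,D)+(1-t)U(p_2,D)-U(q,D).
\]
By Lemma~\ref{lem:blackwell}, for each $t$ there is a problem $D_t$ with $F(t,D_t)<0$, and Lemma~\ref{lem:decisions}(i) lets us take $D_t$ finite-action. Since $t\mapsto F(t,D)$ is affine, hence continuous, the set $\{t:F(t,D_t)<0\}$ is an open neighborhood of $t$. Compactness of $[0,1]$ then yields finitely many problems $D_1,\dots,D_n$ such that $\min_k F(t,D_k)<0$ for every $t$.

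The key step is to convexify over these problems. Consider the finite zero-sum game in which one player picks $t\in[0,1]$, the other picks $\theta\in\Delta(\{1,\dots,n\})$, and the payoff is $\sum_k\theta_kF(t,D_k)$. This payoff is bilinear, and both strategy sets are compact and convex. By the minimax theorem,
\[
\min_\theta\max_t\sum_k\theta_kF(t,D_k)=\max_t\min_\theta\sum_k\theta_kF(t,D_k)=\max_t\min_k F(t,D_k)<0,
\]
where the last value is negative because it is a maximum of a continuous function on a compact set that is negative everywhere. Fix a minimizing $\theta$ and form the weighted direct sum $D^*=\bigoplus_k\theta_kD_k$. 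By Lemma~\ref{lem:decisions}(ii), $D^*$ lies in $\sD$, has a finite action set, and satisfies $F(t,D^*)=\sum_k\theta_kF(t,D_k)<0$ for all $t$. Taking $t=1$ gives $U(p_1,D^*)<U(q,D^*)$, and taking $t=0$ gives $U(p_2,D^*)<U(q,D^*)$. This contradicts (i) at $D^*$. The finite-action clause and the independence of $t$ from $D$ both follow from the construction.

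The main obstacle is producing a single problem that defeats both sources at once; this is exactly where the direct-sum closure of Lemma~\ref{lem:decisions}(ii) and the minimax step are used. Choosing the finite subfamily via compactness of $[0,1]$ keeps the game finite on one side, so an elementary minimax theorem (for example, von Neumann's theorem applied with $t$ ranging over an interval) suffices. An alternative route is separation of $q$ from $\clco\Lambda_{\{p_1,p_2\}}(X)$, but the minimax route avoids topological bookkeeping.
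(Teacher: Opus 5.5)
Your proof is correct and is essentially the paper's argument. The paper derives (i)$\Rightarrow$(ii) from the source-set lifting principle (Lemma~\ref{lem:lifting}), whose engine (Lemma~\ref{lem:minimax}) likewise uses compactness of the lottery space to extract finitely many problems, then a finite-dimensional separation (your von Neumann minimax step) to obtain weights $\theta$, then closure under weighted direct sums (Lemma~\ref{lem:decisions}(ii)) to turn those weights into a single problem, and finally Lemma~\ref{lem:blackwell}. The differences are cosmetic: you specialize to two sources and build finite actions into the construction, whereas the paper obtains the finite-action clause afterwards from the uniform approximation in Lemma~\ref{lem:decisions}(i).
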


The easy direction illustrates why the source identity is disclosed.  If the
tagged mixture $p_t$ dominates $q$, then
\[
 U(q,D)\leq U(p_t,D)=tU(p_1,D)+(1-t)U(p_2,D)\leq\max\{U(p_1,D),U(p_2,D)\}.
\]
The content of the theorem is the reverse direction.  Although the source attaining the maximum may vary with $D$, one lottery weight works for all problems.  We prove this after establishing the decision-rule representation and the compactness argument that will also be used for comparisons of source sets.

\begin{lemma}[Geometry of action plans]
\label{lem:geometry}
If $R\subseteq\sE$ is nonempty and compact and $A$ is compact metric, then $\Lambda_R(A)$ is compact. It is convex if $R$ is convex. For every nonempty compact $P\subseteq\sE$,
\begin{equation}
 \Lambda_{\coapp(P)}(A)=\clco\bigl(\Lambda_P(A)\bigr).
 \label{eq:lambda-hull}
\end{equation}
\end{lemma}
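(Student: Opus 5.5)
The plan is to represent feasible action distributions through joint laws on belief--action pairs, exactly as in the proof of Lemma~\ref{lem:blackwell}. For a canonical source $r$ with posterior distribution $\tau_r$, every action plan induces a joint law $\zeta\in\Delta(X\times A)$ with first marginal $\tau_r$. The implemented vector is the image of $\zeta$ under the map
\[
 \Psi(\zeta)=\Bigl(\bar\pi(\omega)^{-1}\int_{X\times\cdot}\gamma(\omega)\,\zeta(d\gamma,da)\Bigr)_{\omega\in\Omega}.
\]
Conversely, every joint law with first marginal $\tau_r$ disintegrates, since $A$ is standard Borel, into a kernel $X\to\Delta(A)$. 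By Lemma~\ref{lem:canonical}, which lets us replace $r$ with its canonical version without changing $\Lambda_r(A)$, we get $\Lambda_r(A)=\Psi(Z_r)$, where $Z_r=\{\zeta\in\Delta(X\times A)\mid \text{first marginal } \tau_r\}$. The map $\Psi$ is continuous and affine for the weak topologies, because $\gamma(\omega)$ is continuous and bounded on $X$. Hence
\[
 \Lambda_R(A)=\Psi(Z_R),\qquad Z_R=\{\zeta\mid \text{marg}_X\zeta\in R\}.
\]

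Compactness and convexity follow from this representation. The space $\Delta(X\times A)$ is compact metric and the marginal map is continuous, so $Z_R$ is the preimage of the compact set $R$ and is therefore closed, hence compact. Its image $\Lambda_R(A)$ under $\Psi$ is then compact. If $R$ is convex, then $Z_R$ is convex because the marginal map is affine, so $\Lambda_R(A)$ is convex as well.

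For \eqref{eq:lambda-hull}, I argue the two inclusions separately. By Lemma~\ref{lem:append}, $\coapp(P)$ is compact, convex and contains $P$. The first part of the lemma then shows that $\Lambda_{\coapp(P)}(A)$ is closed, convex and contains $\Lambda_P(A)$, which gives $\supseteq$. For $\subseteq$, take $\lambda\in\Lambda_{\coapp(P)}(A)$, so $\lambda=\Psi(\zeta)$ with $\text{marg}_X\zeta=\tau_{p_\xi}=\int_P\tau_r\,\xi(dr)$.

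The case of finitely supported $\xi=\sum_i\xi_i\delta_{r_i}$ is direct. Take the disintegration kernel $\alpha$ of $\zeta$ and set $\zeta_i=\tau_{r_i}\otimes\alpha$. Then $\zeta=\sum_i\xi_i\zeta_i$, so
\[
 \lambda=\sum_i\xi_i\Psi(\zeta_i)\in\operatorname{co}\Lambda_P(A).
\]
Note that this uses the same plan $\alpha$ on each component, which is legitimate because $\alpha$ is a function of the belief alone.

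The general case reduces to the finite one in the same way. Set $\zeta_\xi=\bigl(\int\tau_r\,\xi(dr)\bigr)\otimes\alpha$, where $\alpha$ is the disintegration kernel of $\zeta$ as before. Choose finitely supported $\xi_n\to\xi$ weakly, which is possible as in the proof of Lemma~\ref{lem:append}.

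The main obstacle is the continuity step. We need $\tau_{p_{\xi_n}}\otimes\alpha\to\tau_{p_\xi}\otimes\alpha$, but $\alpha$ is only measurable, so weak convergence of the marginals does not transfer directly. I would avoid this by using a barycentric identity instead of approximation:
\[
 \lambda=\int_P\Psi(\tau_r\otimes\alpha)\,\xi(dr).
\]
This identity follows by Fubini, since both sides integrate test functions identically. Each $\Psi(\tau_r\otimes\alpha)$ lies in the compact set $\Lambda_P(A)$. A barycenter of a probability measure supported on a compact subset of the locally convex space of signed measure vectors lies in the closed convex hull of that subset, by Hahn--Banach separation against continuous linear functionals. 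This gives $\lambda\in\clco(\Lambda_P(A))$.

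One measurability check remains: $r\mapsto\Psi(\tau_r\otimes\alpha)$ must be Borel. This holds because each coordinate $\int\int f(a)\gamma(\omega)\alpha(da\mid\gamma)\,\tau_r(d\gamma)$ is a measurable function of $r$, as noted in the paper's discussion of tagged mixtures.
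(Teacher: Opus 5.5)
Your proposal is correct and follows essentially the same route as the paper. You write feasible distributions as the continuous affine image of the compact set of joint laws on $X\times A$ whose first marginal lies in $R$; you obtain $\supseteq$ because $\Lambda_{\coapp(P)}(A)$ is compact, convex and contains $\Lambda_P(A)$; and you obtain $\subseteq$ from the barycentric identity $\alpha\circ p_\xi=\int_P(\alpha\circ r)\,\xi(dr)$ together with separation by a continuous linear functional, which is exactly how the paper argues (the detour through finitely supported lotteries is unnecessary, and you were right to drop it).
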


\begin{proof}
Let $\zeta\in\Delta(X\times A)$ have first marginal $\tau_r$ for some $r\in R$. The set of all such $\zeta$ is compact: the space $\Delta(X\times A)$ is compact and taking a marginal is continuous. It is convex if $R$ is convex. Define
\[
 (T\zeta)(B\mid\omega)=\frac{1}{\bar\pi(\omega)}\int_{X\times B}\gamma(\omega)\zeta(d\gamma,da).
\]
This is a probability measure in each state. The map $T$ is continuous in the weak topologies, because for every continuous $v:A\to\IR$ its integral against $(T\zeta)(\cdot\mid\omega)$ is the integral of the continuous function $\gamma(\omega)v(a)/\bar\pi(\omega)$ against $\zeta$. It is also affine.

Disintegrating $\zeta$ given $\gamma$ gives an action plan $\alpha(da\mid\gamma)$ with $\zeta(d\gamma,da)=\tau_r(d\gamma)\alpha(da\mid\gamma)$. The definition of $T$ then gives $T\zeta=\alpha\circ r$. Conversely, every action plan generates such a joint law. Thus the image of the compact set under $T$ is exactly $\Lambda_R(A)$, proving the first two assertions.

Apply these assertions to $R=\coapp(P)$. Its image is compact and convex and contains $\Lambda_P(A)$, so it contains the right side of \eqref{eq:lambda-hull}. For the reverse inclusion, take $p_\xi\in\coapp(P)$ and represent any of its feasible action distributions using a plan $\alpha$ on its canonical signals. Since $\tau_{p_\xi}=\int_P\tau_r\,\xi(dr)$, write $\lambda_r=\alpha\circ r$. The distribution generated by $p_\xi$ is the state-by-state average
\[
       \bar\lambda=\alpha\circ p_\xi=\int_P\lambda_r\,\xi(dr).
\]
The map $r\mapsto\lambda_r$ is measurable, and every $\lambda_r$ belongs to $\Lambda_P(A)$. Let $C=\clco(\Lambda_P(A))$. Suppose, contrary to the desired inclusion, that $\bar\lambda\notin C$. Separation supplies continuous functions $v_\omega:A\to\IR$ such that the continuous linear functional
\[
 \ell(\lambda)=\sum_{\omega\in\Omega}\int_A v_\omega(a)\lambda(da\mid\omega)
\]
satisfies $\ell(\bar\lambda)>\sup_{\lambda\in C}\ell(\lambda)$. Yet
\[
 \ell(\bar\lambda)=\int_P\ell(\lambda_r)\xi(dr).
\]
Because every $\lambda_r$ belongs to $C$, the right-hand side is at most $\sup_{\lambda\in C}\ell(\lambda)$, a contradiction. This completes the proof.
\end{proof}

A finite-dimensional separation argument supplies the uniform lottery. It is applied to a compact set of candidate mixtures and requires no topology on the collection of decision problems.

\begin{lemma}[A Uniformization and Minimax Lemma]
\label{lem:minimax}
Let $K$ be a nonempty compact convex subset of a topological vector space. Let $\{f_d\mid d\in I\}$ be a nonempty family of continuous affine real functions on $K$, closed under finite convex combinations. For any real $c$,
\[
 \min_{x\in K}f_d(x)\leq c\text{ for every }d\in I\quad\Longrightarrow\quad\text{some }x\in K\text{ satisfies }f_d(x)\leq c\text{ for every }d\in I.
\]
Moreover, $\min_{x\in K}\sup_{d\in I}f_d(x)=\sup_{d\in I}\min_{x\in K}f_d(x)$ whenever the right side is finite, and the minimum on the left is attained.
\end{lemma}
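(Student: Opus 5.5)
The plan is to prove the first assertion by a compactness (finite-intersection) argument and to reduce the finite case to finite-dimensional separation. For each $d\in I$ let $K_d=\{x\in K\mid f_d(x)\leq c\}$. Each $K_d$ is closed in $K$ because $f_d$ is continuous, nonempty by hypothesis, and convex because $f_d$ is affine. Since $K$ is compact, it suffices to show that every finite subfamily $K_{d_1},\ldots,K_{d_n}$ has a common point. Then $\bigcap_{d\in I}K_d\neq\emptyset$, and any point in the intersection is the desired $x$.

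For the finite step, consider the map $F:K\to\IR^n$, $F(x)=(f_{d_1}(x),\ldots,f_{d_n}(x))$. It is continuous and affine, so $F(K)$ is a compact convex subset of $\IR^n$. Let $O=\{y\in\IR^n\mid y_k\leq c\text{ for all }k\}$, a closed convex set. Suppose $F(K)\cap O=\emptyset$. Strict separation of the compact convex set $F(K)$ from the closed convex set $O$ yields $\lambda\in\IR^n\setminus\{0\}$ and $\beta$ with $\lambda\cdot y>\beta$ on $F(K)$ and $\lambda\cdot y<\beta$ on $O$. Since $O$ is unbounded below in each coordinate, boundedness of $\lambda\cdot y$ from above on $O$ forces $\lambda_k\geq0$. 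Normalize so that $\sum_k\lambda_k=1$, giving $\theta=\lambda\in\Delta(\{1,\ldots,n\})$. Evaluating at $(c,\ldots,c)\in O$ gives $c<\beta$. Hence $\sum_k\theta_kf_{d_k}(x)>\beta>c$ for every $x\in K$. But the family is closed under finite convex combinations, so $\sum_k\theta_kf_{d_k}=f_{d^*}$ for some $d^*\in I$. Then $\min_Kf_{d^*}>c$, contradicting the hypothesis. This separation step, including the sign argument for $\lambda$ and the use of convex-combination closure, is the only substantive point. Continuity makes the minima attained, so strict inequalities are preserved.

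For the minimax identity, the inequality $\sup_d\min_xf_d\leq\min_x\sup_df_d$ is trivial, where the left side of the identity is understood as an infimum until attainment is shown. Set $c=\sup_d\min_xf_d$, which is finite by assumption. The hypothesis of the first part holds with this $c$, so some $x^*$ satisfies $\sup_df_d(x^*)\leq c$. Therefore $\inf_x\sup_df_d\leq c$, which yields equality, and the infimum is attained at $x^*$. Note that $x\mapsto\sup_df_d(x)$ is lower semicontinuous, which is consistent with attainment, but the direct construction of $x^*$ already establishes it.
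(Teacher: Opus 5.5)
Your proof is correct and follows the paper's argument step for step: compactness reduces the problem to a finite subfamily, strict separation of the compact convex image from $(-\infty,c]^n$ yields nonnegative weights, and closure under convex combinations gives the contradiction. The minimax identity then follows, as in the paper, by taking $c=\sup_d\min_x f_d$; you simply write out the sign, normalization, and evaluation-at-$(c,\ldots,c)$ details that the paper states briefly.
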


\begin{proof}
Suppose the closed sets $\{x\mid f_d(x)\leq c\}$ have empty intersection. Compactness supplies a finite subfamily $d_1,\ldots,d_n$ with empty intersection. The compact convex set
\[
 G=\{(f_{d_1}(x),\ldots,f_{d_n}(x))\mid x\in K\}
\]
is disjoint from $(-\infty,c]^n$. Strict separation of a compact convex set and this closed convex set gives nonnegative weights $\theta_1,\ldots,\theta_n$, summing to one, with
\[
       \min_{x\in K}\sum_k\theta_k f_{d_k}(x)>c.
\]
The separating weights must be nonnegative because $(-\infty,c]^n$ is unbounded in each negative coordinate direction. The convex combination is a member of the given family, a contradiction. This proves the implication. Apply it with $c=\sup_d\min_x f_d(x)$ to obtain the upper bound in the displayed minimax identity, with attainment. The reverse inequality always holds.
\end{proof}

Say that a nonempty decision class $\mathcal C\subseteq\sD$ is \emph{closed under finite direct sums} if the problems constructed in Lemma~\ref{lem:decisions}(ii) remain in the class. It is enough that an equivalent problem, with the same value at every source, belongs to the class.

The source-set lifting argument can now be stated entirely in terms of decision values, without invoking a statistical order. Closure under bundles lets us replace a source choice that may vary from one problem to another with one tagged mixture that works uniformly over the decision class. The pairwise Blackwell, Lehmann, and Le Cam results will then translate this uniform value statement into their respective statistical languages.

\begin{lemma}[Source-set lifting principle]
\label{lem:lifting}
Let $P,Q$ be source sets, let $\mathcal C\subseteq\sD$ be closed under finite direct sums, and let $c\geq0$. Then the following statements are equivalent:
\begin{enumerate}[label=(\roman*)]
\item $U(P,D)\geq U(Q,D)-c$ for every $D\in\mathcal C$.
\item For every $q_\eta\in\coapp(Q)$, there is
      $p_\xi\in\coapp(P)$ such that
      $U(p_\xi,D)\geq U(q_\eta,D)-c$ for every $D\in\mathcal C$.
\end{enumerate}
\end{lemma}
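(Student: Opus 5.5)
The plan is to prove the two directions separately. (ii)$\Rightarrow$(i) is the easy direction and uses only the affine value formula of Lemma~\ref{lem:append}. (i)$\Rightarrow$(ii) is the substantive direction; it comes from applying the uniformization Lemma~\ref{lem:minimax} to the compact convex set $\Delta(P)$.

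For (ii)$\Rightarrow$(i), fix $D\in\mathcal C$. Choose $q\in Q$ attaining $U(Q,D)$, and take $\eta=\delta_q$. Then $q_\eta$ is Blackwell equivalent to $q$, so $U(q_\eta,D)=U(Q,D)$. Condition~(ii) supplies $p_\xi$ with $U(p_\xi,D)\geq U(Q,D)-c$. Lemma~\ref{lem:append} gives $U(p_\xi,D)=\int_P U(r,D)\,\xi(dr)\leq U(P,D)$, which proves (i).

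For (i)$\Rightarrow$(ii), fix $\eta\in\Delta(Q)$. I take $K=\Delta(P)$, which is compact and convex in the weak topology inside the vector space of finite signed Borel measures on $P$. For each $D\in\mathcal C$, define
\[
 f_D(\xi)=U(q_\eta,D)-c-\int_P U(r,D)\,\xi(dr).
\]
This function is affine in $\xi$. It is also continuous, because $r\mapsto U(r,D)$ is continuous on the compact set $P$ by \eqref{eq:posteriorvalue}.

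Next I check that the family is closed under finite convex combinations. Given $D_1,\ldots,D_n$ and weights $\theta$, the direct sum $D^*=\bigoplus_k\theta_kD_k$ lies in $\mathcal C$ (or an equivalent problem does). By \eqref{eq:directsum}, $U(s,D^*)=\sum_k\theta_kU(s,D_k)$ for every source $s$, in particular for $q_\eta$ and for each $r\in P$. Hence $f_{D^*}=\sum_k\theta_kf_{D_k}$. This is exactly where closure under direct sums enters.

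To apply Lemma~\ref{lem:minimax} with threshold $0$, I need $\min_\xi f_D(\xi)\leq0$ for each $D$. The minimum is attained at the Dirac mass on a maximizer of $U(\cdot,D)$ over $P$, so $\min_\xi f_D(\xi)=U(q_\eta,D)-c-U(P,D)$. By Lemma~\ref{lem:append}, $U(q_\eta,D)\leq U(Q,D)$, and (i) then gives that this quantity is $\leq0$. Lemma~\ref{lem:minimax} now yields a single $\xi$ with $f_D(\xi)\leq0$ for every $D\in\mathcal C$. Rewriting through Lemma~\ref{lem:append}, this says $U(p_\xi,D)\geq U(q_\eta,D)-c$ for all $D\in\mathcal C$, which is~(ii).

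The main obstacle is verifying the hypotheses of Lemma~\ref{lem:minimax} in this infinite-dimensional setting. First, $\Delta(P)$ has to sit as a compact convex subset of a topological vector space; I would use the space of signed measures with the weak topology induced by $C(P)$, and this step should be routine. Second, each $f_D$ has to be continuous, which reduces to continuity of $r\mapsto U(r,D)$ already noted after \eqref{eq:posteriorvalue}. Third, the convex-combination closure of $\{f_D\}$ must hold exactly, including the constant term involving $q_\eta$; this is supplied by \eqref{eq:directsum} applied to the source $q_\eta$. Finally, the case $n=1$ of the finite subfamily in Lemma~\ref{lem:minimax} should be checked separately, but it is handled by the same argument.
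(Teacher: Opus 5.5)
Your proposal is correct and follows the paper's proof essentially step for step. It uses the same affine functions $f_D$ on $\Delta(P)$, with closure under convex combinations supplied by direct sums, then Lemma~\ref{lem:minimax}, and degenerate lotteries for the converse direction; folding $-c$ into $f_D$ with threshold $0$ is equivalent to the paper's use of threshold $c$.
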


\begin{proof}
Assume~(i) and fix $\eta\in\Delta(Q)$. On the compact convex space $K=\Delta(P)$, define
\[
 f_D(\xi)=U(q_\eta,D)-\int_P U(r,D)\xi(dr).
\]
This function is continuous and affine. Direct sums imply that finite convex combinations of these functions are again members of the family. Furthermore,
\[
 \min_{\xi\in\Delta(P)}f_D(\xi)
 =U(q_\eta,D)-U(P,D)
 \leq U(Q,D)-U(P,D)\leq c.
\]
Lemma~\ref{lem:minimax} yields a single $\xi$ for which $f_D(\xi)\leq c$ for every $D\in\mathcal C$. Lemma~\ref{lem:append} gives~(ii).

Conversely, apply~(ii) to each $q\in Q$, viewed as a degenerate lottery. For every $D$,
\[
       U(q,D)-c\leq U(p_\xi,D)\leq U(P,D).
\]
Maximizing over $q$ proves~(i).
\end{proof}

\begin{proof}[Proof of Theorem~\ref{thm:pair}]
Apply Lemma~\ref{lem:lifting} with $P=\{p_1,p_2\}$, $Q=\{q\}$, $\mathcal C=\sD$, and $c=0$. It produces a single $t\in[0,1]$ satisfying $U(p_t,D)\geq U(q,D)$ for all $D$. Lemma~\ref{lem:blackwell} gives $p_t\trianglerighteq q$. The reverse implication is the value calculation displayed immediately
after the theorem statement. The finite-action assertion follows from Lemma~\ref{lem:decisions}(i).
\end{proof}

\begin{example}[Complementarity without individual dominance]
\label{ex:complementarity}
Let $\Omega=\{\omega_1,\omega_2,\omega_3\}$. Source $p_1$ reveals whether the state is $\omega_1$, and $p_2$ reveals whether it is $\omega_3$. Set $q=\tfrac12p_1\oplus\tfrac12p_2$. The pair is complementary relative to $q$ by Theorem~\ref{thm:pair}. Under the uniform prior, consider the binary decision problem that pays one for correctly reporting whether the state is $\omega_1$ and zero otherwise. Values under $(p_1,p_2,q)$ are $(1,2/3,5/6)$. For the analogous problem about $\omega_3$, the values are $(2/3,1,5/6)$. Thus neither $p_1$ nor $p_2$ Blackwell dominates $q$.
\end{example}

\subsection{Comparing Compact Source Sets}

\begin{theorem}[Blackwell comparison of source sets]
\label{thm:sets}
Let $P,Q$ be source sets. The following statements are equivalent:
\begin{enumerate}[label=(\roman*)]
\item For every $D\in\sD$, $U(P,D)\geq U(Q,D)$.
\item For every nonempty compact metric action space $A$,
      \[
       \clco\bigl(\Lambda_P(A)\bigr)
       \supseteq
       \clco\bigl(\Lambda_Q(A)\bigr).
      \]
\item For every $q_\eta\in\coapp(Q)$, there exists
      $p_\xi\in\coapp(P)$ such that $p_\xi\trianglerighteq q_\eta$.
\end{enumerate}
In~(i) and~(ii), it is equivalent to test only finite action
spaces. For finite $A$, the closures in~(ii) can be omitted.
\end{theorem}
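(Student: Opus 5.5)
The plan is to prove the cycle (i)$\Rightarrow$(iii)$\Rightarrow$(ii)$\Rightarrow$(i), reusing the machinery already in place.

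\emph{(i)$\Rightarrow$(iii).} Apply Lemma~\ref{lem:lifting} with $\mathcal C=\sD$ and $c=0$. The class $\sD$ is closed under finite direct sums by Lemma~\ref{lem:decisions}(ii). So for every $q_\eta\in\coapp(Q)$ there is a $p_\xi\in\coapp(P)$ with $U(p_\xi,D)\geq U(q_\eta,D)$ for all $D\in\sD$. Lemma~\ref{lem:blackwell}, (iii)$\Rightarrow$(i), then converts this into $p_\xi\trianglerighteq q_\eta$.

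\emph{(iii)$\Rightarrow$(ii).} Fix a compact metric $A$. By Lemma~\ref{lem:blackwell}, (i)$\Rightarrow$(ii), condition (iii) gives $\Lambda_{q_\eta}(A)\subseteq\Lambda_{p_\xi}(A)\subseteq\Lambda_{\coapp(P)}(A)$ for every $q_\eta\in\coapp(Q)$. Taking the union over $q_\eta$ yields $\Lambda_{\coapp(Q)}(A)\subseteq\Lambda_{\coapp(P)}(A)$. Equation~\eqref{eq:lambda-hull} in Lemma~\ref{lem:geometry} identifies both sides with the closed convex hulls in (ii).

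\emph{(ii)$\Rightarrow$(i).} For $D=(A,u,\pi)$, write $U(p,D)$ as the supremum over $\lambda\in\Lambda_p(A)$ of the functional $\ell_D(\lambda)=\sum_\omega\pi(\omega)\int_A u(\omega,a)\lambda(da\mid\omega)$. This functional is continuous and linear in the product weak topology, so its supremum over $\Lambda_P(A)$ equals its supremum over $\clco(\Lambda_P(A))$. Hence $U(P,D)=\sup_{\clco\Lambda_P(A)}\ell_D\geq\sup_{\clco\Lambda_Q(A)}\ell_D=U(Q,D)$.

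\emph{Finite-action refinements.} For (i), Lemma~\ref{lem:decisions}(i) shows that a strict gap $U(Q,D)>U(P,D)$ persists for some finite-action $D_h$, with $h$ less than half the gap. Indeed, $U(Q,D_h)\geq U(Q,D)-h$ and $U(P,D_h)\leq U(P,D)$. So testing finite-action problems suffices. For (ii) on finite $A$, the only check needed is that (ii) restricted to finite $A$ implies (i) restricted to finite-action problems, which the previous step already gives.

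For the omission of closures, I expect that for finite $A$ the set $\Delta(A)^\Omega$ is a finite-dimensional simplex product and each $\Lambda_p(A)$ is compact convex by Lemma~\ref{lem:geometry}. The union over compact $P$ should be compact, since $\Lambda_P(A)$ is the image of a compact set. Its convex hull in finite dimension is then compact by Carath\'eodory, so the closure is redundant. I need to double-check compactness of $\Lambda_P(A)$ for compact, nonconvex $P$. Lemma~\ref{lem:geometry} states this for compact $R$ without convexity, so it applies.

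\emph{Main obstacle.} The only delicate point is the (i)$\Rightarrow$(iii) direction, where a problem-dependent source must become a problem-independent mixture. The minimax argument of Lemma~\ref{lem:lifting} handles exactly this, so I expect no further difficulty.
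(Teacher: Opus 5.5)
Your proposal is correct and follows the paper's proof essentially step for step. The shared steps are Lemma~\ref{lem:lifting} with Lemma~\ref{lem:blackwell} for (i)$\Rightarrow$(iii), the hull identity of Lemma~\ref{lem:geometry} for (iii)$\Rightarrow$(ii), and the linear-functional argument for (ii)$\Rightarrow$(i); for the refinements, both use Lemma~\ref{lem:decisions}(i) and compactness of convex hulls of compact sets in finite dimensions.
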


\begin{proof}
Lemma~\ref{lem:lifting}, with $\mathcal C=\sD$ and $c=0$, and Lemma~\ref{lem:blackwell} prove the equivalence of~(i) and~(iii). If~(iii) holds, every action distribution feasible under a tagged mixture of $Q$ is feasible under a tagged mixture of $P$. Lemma~\ref{lem:geometry} therefore gives~(ii).

To prove~(ii)$\Rightarrow$(i), expected utility in a fixed $D=(A,u,\pi)$ is the continuous linear functional
\[
       \lambda\longmapsto
       \sum_\omega\pi(\omega)
          \int_A u(\omega,a)\lambda(da\mid\omega).
\]
Its supremum over $\Lambda_P(A)$ is $U(P,D)$. Taking a closed convex hull does not change this supremum, so condition~(ii) implies~(i).

Finite-action value comparisons suffice by Lemma~\ref{lem:decisions}(i). The same linear-functional argument shows that finite-action instances of~(ii) imply all finite-action value comparisons, hence~(i) and the full conclusion. Finally, for finite $A$, $\Lambda_P(A)$ is a compact subset of a finite-dimensional space by Lemma~\ref{lem:geometry}; its convex hull is compact and therefore already closed. The same applies to $Q$.
\end{proof}

The theorem permits uncountably many available sources and arbitrary standard Borel signals, including Gaussian signals. Compactness is imposed on their posterior distributions, rather than on the physical signal spaces. The lotteries in the conclusion may have infinite support. 

\section{Lehmann Comparisons}
\label{sec:lehmann}

\subsection{Choosing a Source for a Monotone Problem}

Suppose the decision maker knows that her decision problem will be monotone. Which of two source sets would she prefer to have available? As in Section~\ref{sec:blackwell}, she learns the problem before choosing one source. The better source may therefore depend on the problem. We begin with this choice criterion and ask whether it has a characterization by comparisons of tagged mixtures.

Throughout this section, $\omega_1<\cdots<\omega_m$. Let $\sM_0\subseteq\sD$ be the class of \emph{monotone decision problems} $D=(A,u,\pi)$ with $A$ a nonempty compact subset of the order-preserving copy of $\IR$ fixed above, which we identify with $\IR$, continuous statewise utilities, and the following property. For every $a'>a$, the differences $d_i=u(\omega_i,a')-u(\omega_i,a)$ satisfy \emph{single crossing from below}: whenever $i<j$, $d_i\geq0$ implies $d_j\geq0$, and $d_i>0$ implies $d_j>0$. Priors $\pi$ are unrestricted. Single crossing restricts the sign of an incremental utility, not its numerical magnitude. It does not require that utility be single-peaked in the action. Familiar examples include choosing treatment intensity as illness severity rises, investment as project quality improves, or inventory as demand conditions strengthen. In each case, a higher action becomes attractive, and remains attractive, as the state increases. This scalar single-crossing formulation is used in the information-comparison literature, including \citet{Persico2000}, \citet{AtheyLevin2018}, and \citet{QuahStrulovici2009}. It is distinct from the original correct-action and loss formulation in \citet[Section~4]{Lehmann1988}; Appendix~\ref{app:original-lehmann} treats that formulation separately. \citet{Kim2023} considers a different extension that permits multidimensional actions.

The economically relevant comparison is
\begin{equation*}
       U(P,D)\geq U(Q,D)\quad\text{for every }D\in\sM_0.\tag{M0}\label{eq:primitive-monotone}
\end{equation*}
For finite source sets, this says that $\max_{p\in P}U(p,D)\geq\max_{q\in Q}U(q,D)$ in every monotone problem. For $P=\{p_1,p_2\}$ and $Q=\{q\}$, it is complementarity relative to $q$ on this restricted decision class. The natural analogue of Theorem~\ref{thm:sets} would replace Blackwell dominance between tagged mixtures by Lehmann dominance. Two different obstacles arise: tagged mixtures may cease to be monotone, and even when they remain monotone, comparison~\eqref{eq:primitive-monotone} need not yield a single lottery that works for all primitive problems.

\subsection{When Tagged Mixtures Preserve MLRP}

For an experiment with ordered real signals and positive conditional densities $f(\cdot\mid\omega_i)$ relative to a common measure, the \emph{monotone likelihood ratio property} (MLRP) means
\[
 f(x\mid\omega_i)f(x'\mid\omega_j)\geq f(x\mid\omega_j)f(x'\mid\omega_i)\quad\text{if }x<x'\text{ and }i<j.
\]
Thus a higher signal raises the likelihood of a higher state relative to a lower one. For finite signals, densities are conditional signal probabilities and the displayed inequality compares columns of the experiment's probability matrix.

For beliefs $\gamma,\gamma'$ with strictly positive coordinates, write $\gamma\leq_{\mathrm{MLR}}\gamma'$ if
\[
 \gamma(\omega_i)\gamma'(\omega_j)\geq\gamma(\omega_j)\gamma'(\omega_i)\quad\text{for every }i<j.
\]
Equivalently, $\gamma'(\omega_i)/\gamma(\omega_i)$ is nondecreasing in $i$. A set of beliefs is an \emph{MLR chain} if every two of its members can be compared in this order. A source satisfying MLRP has an ordered posterior support. The fact that sources individually satisfy MLRP, however, does not order posteriors arising from different sources.

\begin{example}[Tagged mixing can destroy MLRP]
\label{ex:mlrfailure}
Consider two binary-signal experiments with three ordered states:
\[
 p_1=
 \begin{pmatrix}
  0.8&0.2\\
  0.5&0.5\\
  0.2&0.8
 \end{pmatrix}.
\quad\quad
 p_2=
 \begin{pmatrix}
  0.9&0.1\\
  0.4&0.6\\
  0.1&0.9
 \end{pmatrix}.
\]
Rows correspond to states and columns to low and high signals. Both experiments satisfy MLRP. Their high-signal columns $v=(0.2,0.5,0.8)^\top$ and $w=(0.1,0.6,0.9)^\top$ cannot be ordered by likelihood ratios: $w/v=(0.5,1.2,1.125)$ is neither nondecreasing nor nonincreasing. Positive lottery weights do not change this failure. Consequently, for every $t\in(0,1)$, no ordering of the four tagged signals of $t p_1\oplus(1-t)p_2$ satisfies MLRP. Its posterior support contains two incomparable beliefs. Since Blackwell-equivalent sources have the same posterior law (Lemma~\ref{lem:canonical}), it has no Blackwell-equivalent representation satisfying MLRP either.
\end{example}

The example identifies a domain problem for the proposed lifting approach. Restricting each source separately to satisfy MLRP does not make the Lehmann order applicable throughout its tagged hull. We therefore work with source sets whose posterior supports share a common ordering.

\begin{definition}[Admissible source sets]
\label{def:admissible}
A source set $P\subseteq\sE$ is \emph{admissible for Lehmann comparison} if it is nonempty and compact and there is a compact MLR chain $C_P\subseteq\operatorname{int}X$ such that
\[
                  \supp\tau_p\subseteq C_P\quad\text{for every }p\in P.
\]
Here $\operatorname{int}X$ is the relative interior of the belief simplex, consisting of beliefs with positive probability on every state. The support $\supp\tau$ consists of the beliefs whose every open neighborhood has positive $\tau$-probability.
\end{definition}

When $P$ and $Q$ are admissible, each has its own chain $C_P$ or $C_Q$; their union need not be a chain. By Lemma~\ref{lem:append},
\[
       \tau_{p_\xi}=\int_P\tau_p\,\xi(dp).
\]
Moreover, $\supp\tau_{p_\xi}\subseteq C_P$ for every $\xi\in\Delta(P)$. Indeed, every integrand gives probability one to the closed set $C_P$, and so does its average. Thus the tagged hull itself is admissible. The rank construction below shows that every member of this hull has a representation satisfying MLRP.

For a finite collection whose individual posterior supports are compact subsets of $\operatorname{int}X$, the common-chain requirement is exactly the ordering condition needed to make \emph{every} tagged mixture satisfy MLRP. A lottery assigning positive weight to every member has posterior support equal to the union of those supports. If that union contains incomparable beliefs, it cannot have a representation satisfying MLRP; if the union is a chain, it is a suitable compact $C_P$. Example~\ref{ex:mlrfailure} exhibits the former case. For possibly infinite collections, Definition~\ref{def:admissible} also imposes useful compactness and uniform interior regularity: all coordinates on $C_P$ have a common positive lower bound. These topological and positivity conditions are sufficient regularity conditions. We do not assert that they are necessary for every possible analysis of \eqref{eq:primitive-monotone}.

\subsection{The Pairwise Lehmann Order}
We define the ordered comparison using a representation that also treats finite and atomic signals. For any source whose posterior support lies in a compact MLR chain $C\subseteq\operatorname{int}X$, let
\[
       h(\gamma)=\sum_{i=1}^m i\,\gamma(\omega_i).
\]
This function is strictly increasing along distinct MLR-ordered beliefs. To see this, if $r_i=\gamma'(\omega_i)/\gamma(\omega_i)$ is nondecreasing and not constant, then $h(\gamma')-h(\gamma)$ is the strictly positive covariance between $i$ and $r_i$ under $\gamma$. Hence $h$ continuously and one-to-one orders the compact chain; its inverse on $h(C)$ is continuous.

Let $\gamma_p(z)$, $0<z<1$, be the ordered quantile of $\tau_p$: take the usual quantile of the real random variable $h(\gamma)$ under $\tau_p$ and apply $h^{-1}$. For a real distribution with cumulative distribution function $H$, its quantile at $z\in(0,1)$ is $\inf\{x\mid H(x)\geq z\}$. Thus $\gamma_p$ is measurable, nondecreasing in the MLR order, and has law $\tau_p$ when $z$ is uniformly distributed on $(0,1)$. Define a source on $[0,1]$ by the Lebesgue densities
\[
 f_p(z\mid\omega_i)=\frac{\gamma_p(z)(\omega_i)}{\bar\pi(\omega_i)}.
\]
Values at the endpoints are immaterial. This source has unconditional uniform signals under $\bar\pi$, posterior $\gamma_p(z)$, and therefore is Blackwell equivalent to $p$. We call it the \emph{rank representation}. Atoms in the original posterior law become intervals of ranks with the same posterior. The additional randomization on such an interval carries no information about the state.

The rank densities are bounded above and bounded away from zero, and the resulting source satisfies MLRP. Consequently
\[
 F_p(x\mid\omega_i)=\int_0^x f_p(z\mid\omega_i)\,dz
\]
is continuous and strictly increasing from zero to one on $[0,1]$. Its ordinary inverse is well defined. No choice of an inverse at an atom is needed.

For two sources admitting these representations, define $p\trianglerighteq_L q$ if
\begin{equation}
 i\longmapsto F_p^{-1}\bigl(F_q(y\mid\omega_i)\mid\omega_i\bigr)
 \quad\text{is nondecreasing for every }y\in(0,1).
 \label{eq:lehmann}
\end{equation}
This is the quantile formulation of the Lehmann order, with the more informative source on the left. It is unchanged by increasing transformations of the ranked signals. An equivalent form is
\begin{equation}
 \psi_i(x):=F_q^{-1}\bigl(F_p(x\mid\omega_i)\mid\omega_i\bigr)\quad\text{is nonincreasing in }i \quad\text{for every }x\in(0,1).
 \label{eq:quantiletransport}
\end{equation}
Indeed, set $y=\psi_i(x)$ in \eqref{eq:lehmann} and apply the strictly increasing distribution functions and their inverses; the resulting inequality is $\psi_j(x)\leq\psi_i(x)$ for $j>i$. The reverse argument gives the converse. Under MLRP, \citet[Theorem~1]{Kim2023} shows that this comparison is equivalent to $q$ being a monotone quasi-garbling of $p$: relative to $p$, $q$ is generated by state-dependent noise that tends to move reported signals downward in higher states and upward in lower states, thereby weakening the alignment between signals and states.

Related pairwise results for single-crossing and monotone decision problems appear in \citet{Persico2000}, \citet{AtheyLevin2018}, \citet{QuahStrulovici2009}, and \citet{Kim2023}. Lehmann's original theorem cannot be invoked directly here because it concerns the correct-action loss class discussed in Appendix~\ref{app:original-lehmann}, not the single-crossing class $\sM_0$. \citet[Theorem~1]{Persico2000} gives the corresponding equivalence for affiliated signals and single-crossing payoffs, while \citet[Corollary~1]{Kim2023} provides a general necessary-and-sufficient result for generally monotone decision problems. The rank representation above places our finite-state sources, including sources with atomic signals, within Kim's regular signal framework. The next lemma records the resulting specialization in our notation.

\begin{lemma}[Pairwise Lehmann Comparison]
\label{lem:lehmann}
Let the posterior supports of $p$ and $q$ each lie in a compact MLR chain in $\operatorname{int}X$. Then
\[
       p\trianglerighteq_L q\quad\Longleftrightarrow\quad U(p,D)\geq U(q,D)\text{ for every }D\in\sM_0.
\]
\end{lemma}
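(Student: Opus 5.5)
I plan to prove Lemma~\ref{lem:lehmann} by reducing it to the known pairwise results, specifically \citet[Corollary~1]{Kim2023}, through the rank representation. The remaining work is to check that the regularity hypotheses of that result hold in our setting and that the decision class $\sM_0$ matches Kim's scalar single-crossing subclass. First, I replace $p$ and $q$ by their rank representations on $[0,1]$. Both sides of the equivalence are invariant under this replacement. The value side is invariant because $U(\cdot,D)$ depends only on $\tau$ by \eqref{eq:posteriorvalue}. The order $\trianglerighteq_L$ is defined directly through $F_p$ and $F_q$.

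For the sufficiency direction ($p\trianglerighteq_L q$ implies higher value), I will give a direct argument in Persico's style so that the step is self-contained. Fix $D=(A,u,\pi)\in\sM_0$. Using Lemma~\ref{lem:decisions}(i), take $A$ finite; the restriction to a subset of actions preserves single crossing. Under $q$'s rank signal $y$, the posterior $\gamma_q(y)$ is MLR-increasing. Single crossing from below then gives a nondecreasing optimal plan $a_q(y)$, by the standard fact that single crossing is preserved under MLR shifts of beliefs. Define a plan under $p$ by $a(x)=a_q(\psi_i(x))$. This is not feasible as written, because $\psi_i$ depends on the state. The standard fix is to compare state by state. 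The plan $x\mapsto a_q(\psi_{i}(x))$ in state $\omega_i$ reproduces exactly $q$'s action distribution. By \eqref{eq:quantiletransport}, $\psi_i(x)$ is nonincreasing in $i$, so the common threshold rule built from the quantiles shifts actions upward in lower states and downward in higher states relative to $q$'s rule.

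This step is the main obstacle. I will handle it with the cutoff decomposition. Write the payoff of a monotone plan as $u(\omega,a_1)$ plus the sum over $k$ of incremental differences $d^{(k)}(\omega)\ind\{a(x)\ge a_{k+1}\}$. Each increment has a single-crossing sign pattern in $i$. The key point is to choose, for $p$, cutoffs $x_k$ such that $F_p(x_k\mid\omega_{i^*_k})=F_q(y_k\mid\omega_{i^*_k})$ at the crossing state $i^*_k$ of $d^{(k)}$. Monotonicity of $\psi_i$ then shows that $p$'s probability of crossing is weakly lower in states with $d^{(k)}<0$ and weakly higher in states with $d^{(k)}>0$, so each incremental term weakly improves. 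Because the cutoffs are defined through a common crossing index, they must be checked for monotonicity in $k$. If this check fails, I will instead invoke Kim's Corollary~1 on this step.

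For necessity, I will suppose \eqref{eq:lehmann} fails at some $y$ and states $i<j$. Then for $x=F_p^{-1}(F_q(y\mid\omega_i)\mid\omega_i)$ we have $F_p(x\mid\omega_j)<F_q(y\mid\omega_j)$ while equality holds at state $i$. I will build a two-action problem supported on the states $\{\omega_i,\omega_j\}$. The utility difference is negative at $\omega_i$ and positive at $\omega_j$, and the other states are given zero prior. This difference is single-crossing, and the prior weights are tuned so that $q$'s cutoff at $y$ is optimal. Among cutoff rules for $p$, none can match $q$'s Type I error at state $i$ while achieving its power at state $j$. A small perturbation of the prior then gives a strict value gap. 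Monotone cutoffs are optimal under MLRP, so this gap is indeed the value comparison, which contradicts the hypothesis that $p$ has weakly higher value on $\sM_0$.
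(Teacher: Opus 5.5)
Your sufficiency direction has a genuine gap. The cutoff decomposition defines a plan under $p$ only if the cutoffs $x_k$ are nondecreasing in $k$, so that the events $\{a(x)\ge a_{k+1}\}=\{x\ge x_k\}$ are nested. Single crossing of every pairwise difference does not order the crossing states of successive increments.

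For example, take four states and increments $d^{(1)}=(-1,-1,-1,10)$ and $d^{(2)}=(-10,\varepsilon,\varepsilon,1)$ with $0<\varepsilon<1$. All three pairwise differences are single crossing, and the middle action is strictly optimal at the interior belief $(0.4,0.2,0.2,0.2)$. Yet the crossing state drops from $4$ to $2$. Write $g_i(y)=F_p^{-1}(F_q(y\mid\omega_i)\mid\omega_i)$. Your cutoffs are then $x_1=g_4(y_1)$ and $x_2=g_2(y_2)$. Since \eqref{eq:lehmann} compares the $g_i$ across states only at a common argument, nothing in your argument orders these two. Even the most favorable choice inside the admissible intervals would need $g_3(y_1)\le g_2(y_2)$.

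Your fallback, invoking Kim's Corollary~1, is the paper's route. It becomes a proof only after Kim's hypotheses are verified for $\sM_0$ and the rank representations, and the hypothesis that carries sufficiency is one your proposal never states.

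The missing idea is pointwise domination of decreasing assignments. The paper checks this as Kim's dominated-decreasing-decision-rule condition, and it also repairs your direct argument without any cutoffs. You already have the right object. For each $x$, the actions $b_i(x)=a_q(\psi_i(x))$ are nonincreasing in $i$, because $a_q$ is nondecreasing and $\psi_i(x)$ is nonincreasing in $i$.

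Single crossing implies that one action in $\{b_1(x),\ldots,b_m(x)\}$ weakly beats $b_i(x)$ in every state $\omega_i$. Argue greedily: suppose $a\in\{b_1(x),\ldots,b_{k-1}(x)\}$ works for the first $k-1$ states but $u(\omega_k,a)<u(\omega_k,b_k(x))$. Then $a>b_k(x)$, and single crossing from below forces $b_k(x)$ to beat $a$, hence $b_i(x)$, in every state $i<k$. This greedy selection is measurable for finite $A$. The resulting plan under $p$ earns in each $\omega_i$ at least $\int u(\omega_i,b_i(x))\,p(dx\mid\omega_i)$. That equals $q$'s conditional payoff, because $\psi_i(x)$ has law $q(\cdot\mid\omega_i)$ in state $\omega_i$.

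Your necessity argument is a reasonable self-contained substitute for Kim's richness condition, but two points need fixing. First, the sign is reversed: failure at $i<j$ gives $F_p(x\mid\omega_j)>F_q(y\mid\omega_j)$, that is, strictly lower power at equal size. That is what your verbal claim actually uses. Second, the strict value gap should come from separating $q$'s (size, power) pair from $p$'s convex set of achievable pairs. By Neyman--Pearson, the upper boundary of that set is traced by cutoffs. The separating slope is positive and defines the binary problem. This separation step is what your ``tune $q$'s cutoff to be optimal, then perturb the prior'' needs to be; tuning and perturbing alone does not guarantee a gap.
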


The proof is in Appendix~\ref{app:proof-lehmann}.

\subsection{Why Primitive Comparison Does Not Lift}

On admissible source sets, all tagged mixtures belong to the domain of the pairwise order. The candidate condition that every tagged mixture from $Q$ be Lehmann dominated by some tagged mixture from $P$ is sufficient for \eqref{eq:primitive-monotone}. Indeed, apply it to each $q\in Q$ and use Lemma~\ref{lem:lehmann} and the affinity of tagged mixtures to obtain
\[
       U(q,D)\leq U(p_\xi,D)\leq U(P,D)\quad\text{for every }D\in\sM_0.
\]
The question is whether the converse holds. The next example answers no, even with two sources, a single benchmark, and finite signals.

\begin{example}[A Four-State Source Set]
\label{ex:four-state}
Let $\Omega=\{\omega_1<\omega_2<\omega_3<\omega_4\}$ and
$P=\{p_1,p_2\}$, $Q=\{q\}$, where
\[
 p_1=\frac1{10}
 \begin{pmatrix}8&2\\6&4\\2&8\\1&9\end{pmatrix},
\quad\quad
 p_2=\frac1{10}
 \begin{pmatrix}9&1\\8&2\\4&6\\2&8\end{pmatrix},
\quad\quad
 q=\frac1{10}
 \begin{pmatrix}8&2\\6&4\\4&6\\2&8\end{pmatrix}.
\]
Columns are low and high signals. All entries are positive and all three experiments satisfy strict MLRP. The posterior supports of $p_1,p_2$ under the uniform reference prior form the chain
\[
 \frac{(8,6,2,1)}{17}
 \leq_{\mathrm{MLR}}\frac{(9,8,4,2)}{23}
 \leq_{\mathrm{MLR}}\frac{(2,4,8,9)}{23}
 \leq_{\mathrm{MLR}}\frac{(1,2,6,8)}{17}.
\]
For example, the ratios of successive unnormalized vectors are $(9/8,4/3,2,2)$, $(2/9,1/2,2,9/2)$, and $(1/2,1/2,3/4,8/9)$; each is nondecreasing. Normalization multiplies each ratio vector by a positive constant. Take $C_P$ to be these four beliefs and
\[
       C_Q=\left\{\frac{(4,3,2,1)}{10},\frac{(1,2,3,4)}{10}\right\}.
\]
Both are finite MLR chains in $\operatorname{int}X$. Hence $P,Q$ are admissible and every $p_t=t p_1\oplus(1-t)p_2$ has a representation satisfying MLRP.
\end{example}
The point of this example is summarized in the following claim:
\begin{claim}
\label{clm:four-state}
Let $p_1,p_2,q$ be constructed as in the previous example.
\begin{enumerate}[label=(\alph*)]
\item $\max\{U(p_1,D),U(p_2,D)\}\geq U(q,D)$ for every
      $D\in\sM_0$.
\item No $t\in[0,1]$ satisfies $U(p_t,D)\geq U(q,D)$ for every
      $D\in\sM_0$.
\end{enumerate}
\end{claim}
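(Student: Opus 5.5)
The claim has two parts with different character. Part (a) is a plan-transfer argument using single crossing. Part (b) needs explicit problems, and building them is the main obstacle.

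\emph{Part (a).} Fix $D=(A,u,\pi)\in\sM_0$ and write $q_i,p_{1i},p_{2i}$ for the high-signal probabilities in state $\omega_i$. The differences are
\[
(p_{1i}-q_i)_i=\tfrac1{10}(0,0,2,1),\qquad (p_{2i}-q_i)_i=\tfrac1{10}(-1,-2,0,0).
\]
\begin{enumerate}[label=(\arabic*)]
\item Choose an optimal plan for $q$ that is monotone: action $a_L$ after the low signal, $a_H\geq a_L$ after the high signal. The high posterior of $q$ is MLR-above the low posterior, and single crossing from below gives monotone comparative statics in the sense of Milgrom--Shannon. On the compact $A$, I will take the largest maximizer at each posterior, which makes the selection monotone.
\item With $d_i=u(\omega_i,a_H)-u(\omega_i,a_L)$, the value of this plan under any binary source $r$ with high-signal probabilities $r_i$ is
\[
\sum_i\pi(\omega_i)\bigl[u(\omega_i,a_L)+r_i d_i\bigr].
\]
\item Using the same plan under $p_1$ gives a gain over $q$ of $\tfrac1{10}\bigl(2\pi(\omega_3)d_3+\pi(\omega_4)d_4\bigr)$. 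Under $p_2$ the gain is $-\tfrac1{10}\bigl(\pi(\omega_1)d_1+2\pi(\omega_2)d_2\bigr)$.
\item Single crossing closes the argument. If $d_3\geq0$, then $d_4\geq0$ and $p_1$ weakly gains. If $d_3<0$, then $d_1,d_2<0$: indeed $d_j\geq0$ for some $j<3$ would force $d_3\geq0$. So $p_2$ weakly gains.
\end{enumerate}
Since $U(p_k,D)$ is at least the value of the transferred plan, $\max_k U(p_k,D)\geq U(q,D)$.

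\emph{Part (b).} By Lemma~\ref{lem:append}, $U(p_t,D)=tU(p_1,D)+(1-t)U(p_2,D)$. It therefore suffices to exhibit two problems $D_1,D_2\in\sM_0$ with the following properties:
\begin{itemize}
\item in $D_1$: $U(p_1,D_1)<U(q,D_1)$ and $U(p_2,D_1)\leq U(q,D_1)$, which rules out every $t\in(0,1]$;
\item in $D_2$: $U(p_2,D_2)<U(q,D_2)$, which rules out $t=0$.
\end{itemize}
More robustly, I would aim for the symmetric pair $U(p_1,D_1)<U(q,D_1)$, $U(p_2,D_1)\le U(q,D_1)$ and $U(p_2,D_2)<U(q,D_2)$, $U(p_1,D_2)\le U(q,D_2)$. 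For $D_2$, the argument in (3)--(4) suggests where $p_2$ can be strictly worse. Once optimal plans need not coincide, $p_2$ loses whenever the relevant incremental utilities are positive in low states. This is possible because the optimal plan for $p_2$ may differ from the one for $q$. I expect two-action problems to be insufficient, since with two actions (a) essentially forces a plan-transfer dominance. I will use three ordered actions under the uniform prior.

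\emph{Constructing $D_1$.} The target is a problem in which $q$'s two posteriors, proportional to $(8,6,4,2)$ and $(2,4,6,8)$, select different actions and gain from it. At the same time, $p_1$'s posteriors, proportional to $(8,6,2,1)$ and $(2,4,8,9)$, must not be able to exploit the middle action as well. Value depends only on $\phi_D$ integrated against posterior laws, as in \eqref{eq:posteriorvalue}. So I would set this up as a small linear feasibility problem:
\begin{itemize}
\item unknowns: the utility vectors of three actions $a<b<c$;
\item constraints: single crossing for the pairs $(a,b)$, $(b,c)$, $(a,c)$, fixed by prescribing the crossing states;
\item requirements: the strict and weak value inequalities above, each computed by maximizing over the finitely many deterministic plans.
\end{itemize}
$D_2$ is built symmetrically, with the roles of low and high states exchanged, mirroring the asymmetry between $p_1-q$ and $p_2-q$.

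The main obstacle is finding these payoff matrices. The first attempt is to put the crossings of $(a,b)$ and $(b,c)$ at different states, so that the middle action is optimal at exactly one of $q$'s posteriors. After that, verifying the claim is a routine check of finitely many plan values, with the uniform prior throughout, as the text promises.
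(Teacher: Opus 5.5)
Your part (a) is correct and is essentially the paper's argument: transfer $q$'s monotone optimal plan to $p_1$ or $p_2$ and split on the sign of $d_3$. Taking the largest maximizer directly on compact $A$, rather than passing through finite action sets, is fine.

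Part (b) has a genuine gap, and not only because no payoff matrices are actually produced. The target you set for $D_1$ cannot be met by any problem in $\sM_0$ with the uniform prior, whatever the number of actions. Your own steps (1)--(4) show why. Suppose $U(p_1,D)<U(q,D)$. Then $q$'s monotone optimal plan is nonconstant and has $d_3<0$, since otherwise $p_1$ weakly gains. Hence $d_1,d_2<0$, and the transferred plan gains $-\tfrac1{40}(d_1+2d_2)>0$ under $p_2$, so $U(p_2,D)>U(q,D)$. Symmetrically, $U(p_2,D)<U(q,D)$ forces $d_1+2d_2>0$, hence $d_3,d_4>0$ by strict single crossing, hence $U(p_1,D)>U(q,D)$. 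The same holds for any full-support prior. So whenever one specialist is strictly worse than $q$, the other is strictly better, and no single problem can confine $t$ to $\{0\}$ or $\{1\}$. Your "more robust" symmetric pair is infeasible for the same reason, and the three-action feasibility search you describe would return nothing. Your diagnosis that two actions are the obstacle is also mistaken: the plan transfer in (a) works for every compact action set.

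The missing idea is that the incompatibility must be quantitative. You need two problems, each with one specialist beating $q$ and the other falling short, whose magnitudes give disjoint ranges of admissible $t$. The paper does this with binary actions and the uniform prior. It sets $u_k(\omega_i,0)=0$, $(u_1(\omega_i,1))_{i=1}^4=(-6,6,1,1)$ and $(u_2(\omega_i,1))_{i=1}^4=(-1,-1,-6,6)$. These give values $(29/40,1/2,13/20)$ and $(0,9/40,3/20)$ for $(p_1,p_2,q)$, so affinity requires both $t\ge 2/3$ and $t\le 1/3$.

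Alternatively, your sign pattern becomes attainable only if you drop full support, which $\sM_0$ allows. With prior $(0,0,\tfrac12,\tfrac12)$, $p_2$ coincides with $q$ on the support, while $p_1$ is a strict garbling of $q$ there. For instance, $(u(\omega_i,1))_{i=1}^4=(-1,-1,-11,9)$ with $u(\omega_i,0)=0$ gives $U(p_1,D)=0<3/10=U(q,D)=U(p_2,D)$. Symmetrically, prior $(\tfrac12,\tfrac12,0,0)$ with $(u(\omega_i,1))_{i=1}^4=(-9,11,1,1)$ gives $U(p_2,D)=1<13/10=U(q,D)=U(p_1,D)$. That route proves the claim as stated, but it loses the paper's observation that the counterexample survives under a common uniform prior.
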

The proof is in Appendix~\ref{app:proof-four-state}.

The conflict in part~(b) is already visible in two binary-action problems $D_1,D_2$ with the uniform prior and $u_k(\omega_i,0)=0$. Because part~(a) holds on the larger class with unrestricted priors, the same example remains a counterexample if every problem in $\sM_0$ is required to share this uniform full-support prior. In the first problem,
\[
       (u_1(\omega_i,1))_{i=1}^4=(-6,6,1,1).
\]
In the second,
\[
       (u_2(\omega_i,1))_{i=1}^4=(-1,-1,-6,6).
\]
Both satisfy single crossing. Their values are
\[
 \begin{array}{c|ccc}
       &U(p_1,D_k)&U(p_2,D_k)&U(q,D_k)\\ \hline
 D_1   &29/40&1/2&13/20\\
 D_2   &0&9/40&3/20
 \end{array}
\]
By the affine value of tagged mixtures, matching $q$ in $D_1$ requires $t\geq2/3$, whereas matching it in $D_2$ requires $t\leq1/3$. The two problems therefore demand incompatible source frequencies.

The example separates the two roles played by the sources.  Source $p_1$ is the useful specialist in $D_1$, while $p_2$ is the useful specialist in $D_2$.  The primitive source set comparison permits the decision maker to switch between them after seeing which decision problem she faces.  A tagged mixture fixes the relative frequency with which the specialists are available, and the two decision problems demand incompatible frequencies.  Because all posterior supports lie on the required MLR chains, this conflict is not caused by a failure of monotone signal ordering.

By Lemma~\ref{lem:lehmann}, no tagged mixture constructed in Example~\ref{ex:four-state} dominates $q$ in the Lehmann order. Nor can one Blackwell dominate $q$, since Blackwell dominance would imply the two value inequalities in the table above. Thus replacing Lehmann dominance by Blackwell dominance does not characterize \eqref{eq:primitive-monotone}. More generally, redefining the pairwise order cannot repair the proposed lifting equivalence on this domain.

\begin{proposition}[No universal pairwise lifting of primitive comparison]
\label{prop:no-lifting}
There is no binary relation $R$ between sources with compact MLR posterior supports in the relative interior of the relevant belief simplex such that, for every finite ordered state space and every pair of finite admissible source sets $P,Q$, comparison~\eqref{eq:primitive-monotone} is equivalent to
\begin{equation}
       \text{for every }q_\eta\in\coapp(Q),\text{ there is }
       p_\xi\in\coapp(P)\text{ with }p_\xi\,R\,q_\eta.
       \label{eq:proposed-lifting}
\end{equation}
\end{proposition}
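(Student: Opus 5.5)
We argue by contradiction, using Example~\ref{ex:four-state} together with a few auxiliary instances of \eqref{eq:primitive-monotone} chosen to pin down what $R$ must be. Suppose such a relation $R$ exists.

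The first step extracts the content of $R$ on the sources involved. Apply the assumed equivalence to singleton source sets $P=\{p\}$ and $Q=\{q'\}$, both admissible because each source has a compact MLR posterior support in $\operatorname{int}X$. The tagged hull of a singleton is the singleton itself, by Lemma~\ref{lem:append}, since every lottery is degenerate up to Blackwell equivalence. Condition \eqref{eq:proposed-lifting} therefore reduces to $p\,R\,q'$. Comparison \eqref{eq:primitive-monotone} reduces to $U(p,D)\geq U(q',D)$ for all $D\in\sM_0$, which by Lemma~\ref{lem:lehmann} is $p\trianglerighteq_L q'$. Hence $R$ must coincide with the Lehmann order $\trianglerighteq_L$ on every pair of sources with compact MLR posterior supports in the interior. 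This applies in particular to the pairs $(p_t,q)$ with $t\in[0,1]$, because each $p_t$ has posterior support in $C_P$.

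The second step applies the assumed equivalence to $P=\{p_1,p_2\}$ and $Q=\{q\}$ from Example~\ref{ex:four-state}, which are finite and admissible. By Claim~\ref{clm:four-state}(a), \eqref{eq:primitive-monotone} holds, so \eqref{eq:proposed-lifting} must hold. Since $\coapp(Q)=\{q\}$, this yields some $t$ with $p_t\,R\,q$. By the first step this means $p_t\trianglerighteq_L q$. Lemma~\ref{lem:lehmann} then gives $U(p_t,D)\geq U(q,D)$ for every $D\in\sM_0$, which contradicts Claim~\ref{clm:four-state}(b).

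The main obstacle is the first step: we must make sure the singleton specialization is legitimate. Two points need checking. First, $\coapp(\{p\})=\{p\}$ as Blackwell-equivalence classes, and $R$ must be understood as a relation on these classes, or at least as being applied to the canonical representatives produced by the tagged hull. Second, the quantifier ranges over the same four-state space in both uses of the equivalence, so no cross-dimension issue arises.

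If one prefers not to rely on $R$ being defined on equivalence classes, a direct route avoids identifying $R$ globally. In that route, only the singleton instances $(\{p_t\},\{q\})$ for the relevant $t$ are invoked, and these are exactly the pairs needed to derive the contradiction.
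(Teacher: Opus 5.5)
Your proof is correct and follows the paper's argument. The singleton instances force $p\,R\,q$ to coincide with value dominance on $\sM_0$, so Claim~\ref{clm:four-state}(a) applied to Example~\ref{ex:four-state} yields some $p_t\,R\,q$, and the admissible singleton pair $(\{p_t\},\{q\})$ then contradicts Claim~\ref{clm:four-state}(b). Your detour through Lemma~\ref{lem:lehmann} and $\trianglerighteq_L$ is harmless but unnecessary, because the singleton equivalence already converts $p_t\,R\,q$ directly into $U(p_t,D)\geq U(q,D)$ for all $D\in\sM_0$.
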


\begin{proof}
Apply the proposed equivalence to singleton source sets $P=\{p\}$ and $Q=\{q\}$. Their tagged hulls are the same singleton source sets, so it forces
\[
       p\,R\,q\quad\Longleftrightarrow\quad U(p,D)\geq U(q,D)\ \text{for every }D\in\sM_0.
\]
In particular, $R$ must coincide with the pairwise Lehmann order on each such domain. Now use the admissible source sets in Example~\ref{ex:four-state}. Claim~\ref{clm:four-state}(a) and the proposed equivalence would give some $p_t\,R\,q$. Since $\{p_t\}$ is itself admissible, the singleton implication would give dominance in every primitive problem, contradicting Claim~\ref{clm:four-state}(b).
\end{proof}

The proposition rules out any fixed order of the proposed form, including one positioned between Blackwell and Lehmann dominance. It does not rule out characterizations that use the joint structure of source sets, or results on smaller domains. What fails is the Blackwell-style exchange between a source chosen separately for each primitive problem and one source lottery that succeeds in every primitive problem. Restricting to admissible source sets repairs the MLR domain but does not repair this quantifier exchange.

The obstruction is visible in decision values themselves. In Example~\ref{ex:four-state}, let $g(D)=(U(p_1,D)-U(q,D),U(p_2,D)-U(q,D))$. Then $g(D_1)=(3,-6)/40$ and $g(D_2)=(-6,3)/40$, while their average is $(-3,-3)/80$. No primitive problem has both coordinates of $g(D)$ negative, by Claim~\ref{clm:four-state}(a). Its attainable value profiles are therefore nonconvex. This is stronger than observing that a literal product of scalar action sets need not satisfy single crossing: even a different primitive problem cannot reproduce these averaged values at the three displayed sources.

This pinpoints the difference from \citet[Theorem~2(iv)]{ChengBorgers2026}. Their result assumes a convex set of situations and utilities linear in the situation. Here the alternatives $\{p_1,p_2,q\}$ are finite and evaluation of a value profile is a linear coordinate projection; convexity of attainable profiles fails. Accordingly, $q$ is neither weakly dominated by a lottery over the other sources nor redundant, yet it is never uniquely optimal on $\sM_0$. Their Blackwell application obtains convexity by constructing contingent action products. Their Proposition~3 uses finite lattice action spaces and jointly supermodular utilities, a different monotone class that also permits this construction. They explicitly distinguish its informativeness order from Lehmann's. The scalar restriction here prevents that argument from applying.

\citet{Kim2023} studies pairwise comparison on another monotone class that permits multidimensional actions and characterizes the comparison through monotone quasi-garbling. The bundles below also allow action profiles with several coordinates, but they are defined by additive separability across primitive problems. It is this direct-sum closure, rather than multidimensionality by itself, that restores the source-set lifting argument.

\subsection{Bundles of Monotone Problems and the Lifting Theorem}

The preceding failure suggests an explicit extension of the economic question. Suppose one source must inform a finite bundle of monotone decision problems, rather than one primitive decision problem. Let $\sM$ be the class of finite weighted direct sums of members of $\sM_0$, constructed as in Lemma~\ref{lem:decisions}(ii), including single-member sums. Thus, for $D=\bigoplus_{k=1}^n\theta_kD_k$, where $D_k\in\sM_0$ and $\theta\in\Delta(\{1,\ldots,n\})$,
\[
       U(p,D)=\sum_k\theta_kU(p,D_k).
\]
For a source set, the corresponding value is
\[
       U(P,D)=\max_{p\in P}\sum_k\theta_kU(p,D_k).
\]
The decision maker knows the bundle when selecting her source, and can condition each decision problem's action on its signal. She cannot select a different source for each decision problem. Decision problems with a common prior have the direct interpretation of several choices informed by the same source; the prior-weighted construction also allows distinct component priors.

For individual sources, comparing all of $\sM$ is equivalent to comparing all of $\sM_0$: pairwise inequalities can be added, and primitive problems are one-problem bundles. For source sets, this enlargement is a substantive strengthening of \eqref{eq:primitive-monotone}. Indeed, for the equal bundle of $D_1,D_2$ in Example~\ref{ex:four-state},
\[
       U(P,\tfrac12D_1\oplus\tfrac12D_2)=\frac{29}{80}<\frac25=U(q,\tfrac12D_1\oplus\tfrac12D_2).
\]
The source that performs well in one decision problem need not be the one that performs well in the other. Bundling restores convexity of the value profiles by imposing comparisons on their finite convex combinations. It is the finite direct-sum closure of the primitive class, and a direct sum of bundles is again a bundle.

\begin{theorem}[Lehmann lifting for admissible source sets]
\label{thm:lehmann}
Let $P,Q\subseteq\sE$ be admissible source sets. The following statements are equivalent:
\begin{enumerate}[label=(\roman*)]
\item $U(P,D)\geq U(Q,D)$ for every bundle $D\in\sM$ of monotone
      decision problems.
\item For every $q_\eta\in\coapp(Q)$, there exists
      $p_\xi\in\coapp(P)$ such that
      $p_\xi\trianglerighteq_L q_\eta$.
\end{enumerate}
The source sets may be infinite. All tagged mixtures in (ii) admit the rank representations used to define the Lehmann order.
\end{theorem}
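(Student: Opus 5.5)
The plan is to run the same two-step template used for Theorem~\ref{thm:sets}: Lemma~\ref{lem:lifting} handles the quantifier exchange on a class closed under finite direct sums, and Lemma~\ref{lem:lehmann} translates the resulting uniform value comparison into the pairwise statistical order. I would take $\mathcal C=\sM$ and $c=0$. The class $\sM$ is closed under finite direct sums by construction. A weighted direct sum of bundles $\bigoplus_j\beta_j\bigl(\bigoplus_k\theta_{jk}D_{jk}\bigr)$ has, by \eqref{eq:directsum}, the same value at every source as the single bundle $\bigoplus_{j,k}\beta_j\theta_{jk}D_{jk}$ of primitive monotone problems. The paper explicitly allows an equivalent problem to stand in for the direct sum, so this suffices. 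Lemma~\ref{lem:lifting} then turns (i) into the statement that for every $q_\eta\in\coapp(Q)$ there is $p_\xi\in\coapp(P)$ with $U(p_\xi,D)\ge U(q_\eta,D)$ for all $D\in\sM$. Conversely, the same uniform value comparison yields (i).

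Next I would verify that Lemma~\ref{lem:lehmann} applies to $p_\xi$ and $q_\eta$. By Lemma~\ref{lem:append}, $\tau_{p_\xi}=\int_P\tau_p\,\xi(dp)$. Every $\tau_p$ gives mass one to the closed set $C_P$, so $\supp\tau_{p_\xi}\subseteq C_P$, which is a compact MLR chain in $\operatorname{int}X$. The same holds for $q_\eta$ and $C_Q$. This already gives the last sentence of the theorem: every tagged mixture admits the rank representation, and the two representations may use different chains, since Lemma~\ref{lem:lehmann} only asks that each source separately have its support on some such chain.

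It then remains to show that, for two such sources, ``$U(p,D)\ge U(q,D)$ for all $D\in\sM$'' is equivalent to ``$U(p,D)\ge U(q,D)$ for all $D\in\sM_0$.'' Primitive problems are one-member bundles, which gives one direction. For the other, $U(p,\bigoplus\theta_kD_k)=\sum_k\theta_kU(p,D_k)$, and termwise inequalities add. Lemma~\ref{lem:lehmann} then gives $p_\xi\trianglerighteq_L q_\eta$. Chaining these facts proves (i)$\Rightarrow$(ii). For (ii)$\Rightarrow$(i), Lemma~\ref{lem:lehmann} upgrades each Lehmann dominance to dominance on $\sM_0$ and hence on $\sM$. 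Applying this to the degenerate lotteries $q_\eta=\delta_q$ gives $U(q,D)\le U(p_\xi,D)=\int U(r,D)\,\xi(dr)\le U(P,D)$, and maximizing over $q$ yields (i).

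I expect no real obstacle beyond bookkeeping. The one delicate point is the closure of $\sM$ under direct sums. A literal direct sum of bundles has a nested product action space and prior-weighted utilities, and one must check that it coincides, up to value equivalence at every source, with a flat bundle in $\sM$. This follows from the value identity \eqref{eq:directsum} together with the allowance for equivalent problems, and the combined weights $\beta_j\theta_{jk}$ form a probability vector.
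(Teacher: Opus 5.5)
Your proposal is correct and follows the paper's proof essentially step by step: Lemma~\ref{lem:lifting} with $\mathcal C=\sM$ and $c=0$, the support argument via Lemma~\ref{lem:append} so that Lemma~\ref{lem:lehmann} applies to every pair of tagged mixtures, additivity to pass between $\sM_0$ and $\sM$, and degenerate lotteries for the converse. Your explicit check that a direct sum of bundles flattens to a single bundle with weights $\beta_j\theta_{jk}$ simply spells out the closure of $\sM$ under direct sums, which the paper asserts without further detail.
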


\begin{proof}
By admissibility and Lemma~\ref{lem:append}, each tagged mixture from $P$ has posterior support in $C_P$, and each tagged mixture from $Q$ has posterior support in $C_Q$. Hence Lemma~\ref{lem:lehmann} applies to every pair of tagged mixtures in (ii).

Suppose (i) holds. The class $\sM$ is closed under finite direct sums, so Lemma~\ref{lem:lifting}, with $\mathcal C=\sM$ and $c=0$, gives for each $q_\eta$ a single $p_\xi$ satisfying $U(p_\xi,D)\geq U(q_\eta,D)$ for all $D\in\sM$. In particular, this holds for all $D\in\sM_0$. Lemma~\ref{lem:lehmann} implies $p_\xi\trianglerighteq_Lq_\eta$.

Conversely, suppose (ii) holds and fix $q\in Q$, regarded as a degenerate tagged mixture. The corresponding $p_\xi$ dominates $q$ on every primitive problem by Lemma~\ref{lem:lehmann}, and therefore on every bundle by addition. For every $D\in\sM$,
\[
       U(q,D)\leq U(p_\xi,D)=\int_P U(p,D)\xi(dp)\leq U(P,D).
\]
Maximizing over $q\in Q$ proves (i).
\end{proof}

Taking $P=\{p_1,p_2\}$ and $Q=\{q\}$ gives complementarity on monotone decision problem bundles. Definition~\ref{def:admissible} and the bundle criterion address the two obstacles separately: common posterior chains keep tagged mixtures monotone; bundles make the value comparisons closed under finite combinations. Theorem~\ref{thm:lehmann} answers this strengthened choice question. Example~\ref{ex:four-state}, Claim~\ref{clm:four-state}, and Proposition~\ref{prop:no-lifting} explain why its pairwise lifting conclusion cannot instead characterize the original primitive criterion~\eqref{eq:primitive-monotone} throughout the admissible domain.

\section{Le Cam Comparisons}
\label{sec:lecam}

Exact Blackwell dominance is deliberately demanding: any decision problem can witness a failure. For applications such as replacing a data provider or approximating a source set of diagnostic tests, it is natural to ask how much value can be lost in the worst case. A quantitative answer requires a common payoff scale. We use Le Cam's randomization criterion and deficiency; \citet{Torgersen1991} gives a general treatment of this theory.

Let $\sD_1\subseteq\sD$ consist of the problems with $0\leq u(\omega,a)\leq1$ for all $(\omega,a)$. Priors are still unrestricted. This class is closed under finite direct sums by Lemma~\ref{lem:decisions}. Without a utility normalization, any strictly positive value gap could be made arbitrarily large by rescaling.

\subsection{Total Variation and Approximate Garblings}

Economically, deficiency asks how well a decision maker who observes $p$ can imitate one who observes $q$, allowing her to randomize after seeing her signal. It is the smallest worst-state error in that imitation. Zero deficiency recovers an exact Blackwell garbling; positive deficiency measures the largest loss that remains after the best randomized simulation. The proposition below makes this interpretation exact for bounded payoffs.

For two probability measures on the same measurable space, their \emph{total variation distance} is
\begin{equation}
 \TV(\mu,\nu)=\sup_{B}|\mu(B)-\nu(B)|,
 \label{eq:tv}
\end{equation}
where the supremum ranges over measurable sets. With this convention, $0\leq\TV\leq1$; on a finite space it equals $\tfrac12\sum_s|\mu(s)-\nu(s)|$. For any measurable $v$ with $0\leq v\leq1$,
\begin{equation}
 \left|\int v\,d\mu-\int v\,d\nu\right|
       \leq\TV(\mu,\nu).
 \label{eq:tv-utility}
\end{equation}
Indeed, write $v(s)=\int_0^1\ind_{\{v(s)>t\}}\,dt$ and use \eqref{eq:tv}. In particular, total variation contracts under kernels:
\[
       \TV(K\circ\mu,K\circ\nu)\leq\TV(\mu,\nu),
\]
by applying \eqref{eq:tv-utility} to $v(s)=K(B\mid s)$.

The directed \emph{Le Cam deficiency} of $p$ relative to
$q$ is
\begin{equation}
 \delta(p,q)=\inf_{K:S\to\Delta(T)}
       \max_{\omega\in\Omega}
       \TV\bigl((K\circ p)(\cdot\mid\omega),
                      q(\cdot\mid\omega)\bigr).
 \label{eq:deficiency}
\end{equation}
Thus $\delta(p,q)$ is the smallest uniform state-by-state error in simulating $q$ from $p$. Write
$p\trianglerighteq_\varepsilon q$ if $\delta(p,q)\leq\varepsilon$. The orientation is the same as the Blackwell order: a small deficiency means that $p$ can reproduce $q$ accurately. Composition adds error tolerances, so at positive $\varepsilon$ this is an approximate comparison, rather than a transitive order with a fixed tolerance.

By composing kernels and using total-variation contraction, replacing either source by a Blackwell-equivalent one does not change $\delta(p,q)$. We may therefore use the canonical signal space $X$ in its analysis.

\begin{proposition}[Pairwise Randomization Formula]
\label{prop:deficiency}
For any sources $p,q$, the infimum in \eqref{eq:deficiency} is attained, and
\begin{equation}
       \delta(p,q)=\sup_{D\in\sD_1}\bigl[U(q,D)-U(p,D)\bigr].
 \label{eq:randomization}
\end{equation}
The supremum is unchanged if only finite-action problems in $\sD_1$ are used.
\end{proposition}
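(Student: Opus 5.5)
Work on the canonical signal space $X$ for both sources (the deficiency and the values are invariant under Blackwell equivalence). The feasible-set geometry is the same as in Lemma~\ref{lem:blackwell}, now with $A=X$ and a total-variation tolerance instead of exact membership.

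\textbf{Easy inequality.} Fix a kernel $K$ with $\max_\omega\TV((K\circ p)(\cdot\mid\omega),q(\cdot\mid\omega))\le\varepsilon$ and a problem $D\in\sD_1$. Take an optimal plan $\alpha$ for $q$ and use $\alpha\circ K$ under $p$. In each state, the conditional expected utilities are integrals of $v_\omega(\gamma)=\int u(\omega,a)\alpha(da\mid\gamma)\in[0,1]$ against $(K\circ p)(\cdot\mid\omega)$ and $q(\cdot\mid\omega)$. By \eqref{eq:tv-utility} these integrals differ by at most $\varepsilon$. Averaging over $\pi$ gives $U(q,D)-U(p,D)\le\varepsilon$. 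Hence the right side of \eqref{eq:randomization} is at most $\delta(p,q)$.

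\textbf{Attainment.} The feasible set $\Lambda_p(X)$ is compact and convex by Lemma~\ref{lem:geometry}, and its members are exactly the laws $K\circ p$. The map $\lambda\mapsto\max_\omega\TV(\lambda(\cdot\mid\omega),q(\cdot\mid\omega))$ is weakly lower semicontinuous. This holds because $\TV(\mu,\nu)=\sup_{g}\bigl(\int g\,d\mu-\int g\,d\nu\bigr)$, with the supremum over continuous $g:X\to[0,1]$; on a compact metric space, continuous functions approximate indicators in $L^1(\mu+\nu)$. A lower semicontinuous function attains its minimum on a compact set, so the infimum in \eqref{eq:deficiency} is attained.

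\textbf{Hard inequality (main obstacle).} Set $\delta=\delta(p,q)$. I would write the minimax quantity as follows. Let $G$ be the set of vectors $g=(g_\omega)_\omega$ of continuous functions $X\to[0,1]$, and let $\theta\in\Delta(\Omega)$. Then
\[
\delta=\min_{\lambda\in\Lambda_p(X)}\ \sup_{\theta,\,g}\ \sum_\omega\theta_\omega\Bigl(\int g_\omega\,dq(\cdot\mid\omega)-\int g_\omega\,d\lambda(\cdot\mid\omega)\Bigr).
\]
This uses the variational formula for $\TV$ and writes the max over $\omega$ as a sup over $\theta$. The objective is affine and continuous in $\lambda$. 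The family indexed by $(\theta,g)$ is not obviously closed under convex combinations, because the combination $\sum_k c_k\theta^k_\omega g^k_\omega$ is not of product form. The fix is to reparametrize by $w_\omega=\theta_\omega g_\omega$, i.e.\ by continuous $w_\omega\ge0$ with $\sum_\omega\sup w_\omega\le1$. This set is convex, so Lemma~\ref{lem:minimax} applies and exchanges min and sup.

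Each fixed $(\theta,g)$ is then a decision problem in $\sD_1$: take action space $A=X$, prior $\pi=\theta$, and utility $u(\omega,a)=g_\omega(a)$. Its value under $q$ is at least the payoff of the identity plan, $\sum_\omega\theta_\omega\int g_\omega\,dq$. Its value under $p$ is exactly $\sup_{\lambda\in\Lambda_p}\sum_\omega\theta_\omega\int g_\omega\,d\lambda$. So after exchanging min and sup, each term is at most $U(q,D)-U(p,D)$, which gives $\delta\le\sup_{D\in\sD_1}[U(q,D)-U(p,D)]$.

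The delicate point is to check that the reparametrized family really is convex and that the sup over it still equals the max-over-$\omega$ TV. Both follow since $\sum_\omega\int w_\omega\,d(q-\lambda)\le\max_\omega\TV$ whenever $\sum_\omega\sup w_\omega\le1$.

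\textbf{Finite actions.} Lemma~\ref{lem:decisions}(i) approximates each $D\in\sD_1$ uniformly by a finite-action subproblem, which stays in $\sD_1$. So restricting to finite-action problems leaves the supremum unchanged.
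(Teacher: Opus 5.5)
Your proposal is correct and follows essentially the same route as the paper: attainment from compactness of $\Lambda_p(X)$ and weak lower semicontinuity of total variation, the easy bound via the plan $\alpha\circ K$, and the hard bound via Lemma~\ref{lem:minimax} over exactly the paper's test family $\{w_\omega\ge0,\ \sum_\omega\sup w_\omega\le1\}$, followed by reading each test vector as a decision problem in $\sD_1$. Two steps you leave implicit should be stated. First, after the exchange the supremum runs over $w$, so you need each $w$ to factor as $\theta_\omega g_\omega$ with $\theta\in\Delta(\Omega)$ and $g_\omega$ valued in $[0,1]$; take $\theta_\omega\ge\sup w_\omega$ and $g_\omega=w_\omega/\theta_\omega$, which is precisely the paper's prior construction. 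Second, attainment on the original signal spaces follows by composing the canonical minimizer with the equivalence kernels of Lemma~\ref{lem:canonical}.
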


The proof is in Appendix~\ref{app:proof-deficiency}.

The use of all priors in $\sD_1$ corresponds to the maximum over states in \eqref{eq:deficiency}. Fixing one prior would give a different, prior-weighted approximation criterion. The total variation convention in \eqref{eq:tv} and the utility interval $[0,1]$ are responsible for the constant one in \eqref{eq:randomization}.

\subsection{The Exact Shortfall of a Source Set}

For source sets $P,Q$, define the directed deficiency between their tagged hulls by
\begin{equation}
 \delta^\oplus(P,Q)=\sup_{q_\eta\in\coapp(Q)}\min_{p_\xi\in\coapp(P)}\delta(p_\xi,q_\eta).
 \label{eq:setdeficiency}
\end{equation}
The inner minimum exists. Indeed, Proposition~\ref{prop:deficiency} expresses $p\mapsto\delta(p,q)$ as a supremum of continuous functions of $p$, so it is lower semicontinuous on the compact set $\coapp(P)$.

\begin{theorem}[Exact deficiency formula for source sets]
\label{thm:epsilon}
Let $P,Q$ be source sets. Then
\begin{equation}
 \delta^\oplus(P,Q)=\sup_{D\in\sD_1}\bigl[U(Q,D)-U(P,D)\bigr].
 \label{eq:setrandomization}
\end{equation}
In particular, for every $\varepsilon\geq0$, the following are equivalent:
\begin{enumerate}[label=(\roman*)]
\item $U(P,D)\geq U(Q,D)-\varepsilon$
      for every $D\in\sD_1$.
\item For every $q_\eta\in\coapp(Q)$, there exist
      $p_\xi\in\coapp(P)$ and a Markov kernel $K$
      from its signal space to that of $q_\eta$ such that
      \[
       \max_{\omega\in\Omega}
       \TV\bigl((K\circ p_\xi)(\cdot\mid\omega),
                         q_\eta(\cdot\mid\omega)\bigr)
             \leq\varepsilon.
      \]
\end{enumerate}
Finite-action problems suffice both in
\eqref{eq:setrandomization} and in~(i).
\end{theorem}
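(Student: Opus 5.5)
The plan is to combine the source-set lifting principle (Lemma~\ref{lem:lifting}) with the pairwise randomization formula (Proposition~\ref{prop:deficiency}) on the class $\mathcal C=\sD_1$. Lemma~\ref{lem:decisions}(ii) shows that $\sD_1$ is closed under finite direct sums, so Lemma~\ref{lem:lifting} applies to it for every tolerance $c\geq0$. Write $s=\sup_{D\in\sD_1}[U(Q,D)-U(P,D)]$. This quantity is finite: utilities lie in $[0,1]$, so $s\leq1$. It is also nonnegative, since constant utilities give a zero gap.

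I will first prove the equivalence of (i) and (ii) for a fixed $\varepsilon\geq0$, because the displayed formula then follows.

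For (i)$\Rightarrow$(ii), apply Lemma~\ref{lem:lifting} with $c=\varepsilon$. For each $q_\eta\in\coapp(Q)$ it gives some $p_\xi\in\coapp(P)$ with $U(q_\eta,D)-U(p_\xi,D)\leq\varepsilon$ for every $D\in\sD_1$. Proposition~\ref{prop:deficiency} then gives $\delta(p_\xi,q_\eta)\leq\varepsilon$. The same proposition says the infimum in \eqref{eq:deficiency} is attained, so there is a kernel $K$ with the required total-variation bound. The kernel acts on the canonical signal spaces. Passing to the actual signal spaces of the tagged mixtures does not change the deficiency: Blackwell-equivalent replacements leave $\delta$ unchanged, and composing kernels with total-variation contraction preserves the bound.

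For (ii)$\Rightarrow$(i), take $q\in Q$ as a degenerate lottery. Proposition~\ref{prop:deficiency} gives $U(q,D)\leq U(p_\xi,D)+\varepsilon$, and Lemma~\ref{lem:append} gives $U(p_\xi,D)\leq U(P,D)$. Maximizing over $q$ yields (i).

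For the formula itself, set $\varepsilon=s$. Condition (i) holds by definition of $s$, so (ii) holds. This gives $\min_{p_\xi}\delta(p_\xi,q_\eta)\leq s$ for every $q_\eta$, hence $\delta^\oplus(P,Q)\leq s$. For the reverse inequality, fix $q\in Q$. Since $\delta^\oplus(P,Q)$ is defined as a supremum over $\coapp(Q)$, there is $p_\xi$ with $\delta(p_\xi,q)\leq\delta^\oplus(P,Q)$, and the inner minimum is attained as noted before the theorem. The easy direction with $\varepsilon=\delta^\oplus(P,Q)$ then gives $U(Q,D)-U(P,D)\leq\delta^\oplus(P,Q)$ for all $D$.

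Finite-action sufficiency comes from Lemma~\ref{lem:decisions}(i). It replaces each $D\in\sD_1$ by a finite-action $D_h\in\sD_1$, since restricting actions preserves the $[0,1]$ bound. The values change by at most $h$ uniformly over all sources, and hence also for the suprema $U(P,\cdot)$ and $U(Q,\cdot)$. Letting $h\to0$, the supremum in \eqref{eq:setrandomization} and condition (i) are unchanged.

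The main obstacle is the bookkeeping in (i)$\Rightarrow$(ii). Lemma~\ref{lem:lifting} requires that the functions $f_D$ be closed under convex combinations. This holds because weighted direct sums stay in $\sD_1$ and values add by \eqref{eq:directsum}. The step also requires the attainment of the infimum in \eqref{eq:deficiency}, which Proposition~\ref{prop:deficiency} supplies.
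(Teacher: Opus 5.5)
Your proposal is correct and follows essentially the same route as the paper. For one inequality you apply Lemma~\ref{lem:lifting} on the direct-sum-closed class $\sD_1$ together with Proposition~\ref{prop:deficiency}, including its attainment statement; for the other you use degenerate lotteries and affinity. The only difference is organizational: you prove the fixed-$\varepsilon$ equivalence first and then specialize to $\varepsilon=\sup_{D\in\sD_1}[U(Q,D)-U(P,D)]$, whereas the paper proves the formula first and reads off the equivalence.
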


\begin{proof}
Let $v=\sup_{D\in\sD_1}[U(Q,D)-U(P,D)]$. Because values lie in $[0,1]$ and the zero-utility problem is allowed, $v\in[0,1]$. The class $\sD_1$ is closed under finite direct sums. Lemma~\ref{lem:lifting}, applied with $c=v$, therefore gives, for every $q_\eta$, a tagged mixture $p_\xi$ satisfying
\[
       U(q_\eta,D)-U(p_\xi,D)\leq v\quad\text{for every }D\in\sD_1.
\]
Proposition~\ref{prop:deficiency} implies $\delta(p_\xi,q_\eta)\leq v$. Hence $\delta^\oplus(P,Q)\leq v$.

Conversely, put $d=\delta^\oplus(P,Q)$. For every $q\in Q$, choose a minimizer in \eqref{eq:setdeficiency}, treating $q$ as a degenerate tagged mixture. It satisfies $\delta(p_\xi,q)\leq d$. For every $D\in\sD_1$, Proposition~\ref{prop:deficiency} and affinity then give
\[
       U(q,D)\leq U(p_\xi,D)+d\leq U(P,D)+d.
\]
Maximizing over $q$, and then taking the supremum over $D$, yields $v\leq d$. This proves \eqref{eq:setrandomization}.

Condition~(i) is equivalent to $v\leq\varepsilon$. By \eqref{eq:setrandomization}, this is equivalent to $\delta^\oplus(P,Q)\leq\varepsilon$. The inner mixture minimum in \eqref{eq:setdeficiency} and the kernel minimum in Proposition~\ref{prop:deficiency} are both attained, which gives exactly~(ii). Finite-action sufficiency follows from Lemma~\ref{lem:decisions}(i).
\end{proof}

\begin{corollary}[Approximate two-source complementarity]
\label{cor:epsilonpair}
For any sources $p_1,p_2,q$,
\[
 \begin{split}
 &\min_{t\in[0,1]}
     \delta\bigl(t p_1\oplus(1-t)p_2,q\bigr)
 =\sup_{D\in\sD_1}
     \left[U(q,D)-\max\{U(p_1,D),U(p_2,D)\}\right].
 \end{split}
\]
Thus a pair's maximal normalized shortfall relative to $q$ is precisely the smallest simulation error of a disclosed lottery over the pair.
\end{corollary}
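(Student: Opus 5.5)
This corollary is the special case $P=\{p_1,p_2\}$, $Q=\{q\}$ of Theorem~\ref{thm:epsilon}, so the plan is simply to instantiate \eqref{eq:setrandomization} and then identify both sides with the displayed expressions.

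\textbf{Hypotheses.} Both sets are nonempty finite subsets of $\sE$, hence compact source sets, after replacing each source by its canonical class via Lemma~\ref{lem:canonical}. This replacement is harmless. Values depend only on the class by \eqref{eq:posteriorvalue}, and $\delta$ is invariant under Blackwell equivalence, as noted before Proposition~\ref{prop:deficiency}.

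\textbf{Right-hand side.} By definition $U(P,D)=\max\{U(p_1,D),U(p_2,D)\}$ and $U(Q,D)=U(q,D)$. The supremum over $\sD_1$ in \eqref{eq:setrandomization} is therefore exactly the right side of the corollary.

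\textbf{Left-hand side.} A lottery on the singleton $Q$ is degenerate, so $\coapp(Q)=\{q\}$ and the outer supremum in \eqref{eq:setdeficiency} disappears. A lottery $\xi\in\Delta(\{p_1,p_2\})$ is determined by $t=\xi(\{p_1\})\in[0,1]$, and then $p_\xi=t p_1\oplus(1-t)p_2$. Thus $\coapp(P)=\{p_t\mid t\in[0,1]\}$, and $\delta^\oplus(P,Q)=\min_{p_\xi\in\coapp(P)}\delta(p_\xi,q)=\inf_{t\in[0,1]}\delta(p_t,q)$.

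\textbf{Attainment of the minimum.} The only point needing care is that the minimum over $t$ is attained, so that writing $\min$ is justified. Two facts give this. First, $t\mapsto p_t$ is continuous into $\sE$, because by Lemma~\ref{lem:append} its posterior law is $t\tau_{p_1}+(1-t)\tau_{p_2}$. Second, $p\mapsto\delta(p,q)$ is lower semicontinuous, since Proposition~\ref{prop:deficiency} writes it as a supremum of the continuous functions $p\mapsto U(q,D)-U(p,D)$. Hence $t\mapsto\delta(p_t,q)$ is lower semicontinuous on the compact interval $[0,1]$ and attains its minimum. Equating the two sides of \eqref{eq:setrandomization} then yields the corollary.
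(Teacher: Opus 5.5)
Your proposal is correct and takes the same route as the paper, whose proof consists of applying Theorem~\ref{thm:epsilon} with $P=\{p_1,p_2\}$ and $Q=\{q\}$. Your attainment argument, via lower semicontinuity of $t\mapsto\delta(p_t,q)$ on $[0,1]$, spells out what the paper records when it notes that the inner minimum in \eqref{eq:setdeficiency} exists.
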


\begin{proof}
Apply Theorem~\ref{thm:epsilon} to $P=\{p_1,p_2\}$ and $Q=\{q\}$.
\end{proof}

At $\varepsilon=0$, an attained zero-deficiency kernel is an exact Blackwell garbling. Moreover, every nonconstant bounded utility function can be rescaled to $[0,1]$ by a positive affine transformation, which preserves exact value comparisons. Theorem~\ref{thm:epsilon} and Corollary~\ref{cor:epsilonpair} therefore recover the exact results of Section~\ref{sec:blackwell}. No accumulation of the tolerance occurs when forming decision problem bundles: their weights sum to one.

\section{Conclusion}
\label{sec:conclusion}

A collection of specialists can be uniformly as valuable as a generalist even when no specialist is uniformly better.  Under unrestricted Bayesian decision making, this portfolio value has an exact statistical certificate: a disclosed lottery over the specialists Blackwell dominates the benchmark. For two sources the same lottery works across every decision problem; for compact source sets the comparison is between their tagged hulls.  The deficiency formula puts a cardinal version of the same statement in payoff units.

The ordered analysis shows where this uniformity comes from.  Unrestricted decision problems can bundle any finite collection of decision problems, so a source set that wins problem by problem must also survive every weighted combination of those problems.  Primitive scalar-action monotone problems do not have that closure. Even after imposing a common posterior order so that tagged mixtures remain MLRP, a source set may cover each decision problem separately while no source lottery covers all decision problems at once.  Once one source is required to serve a bundle of monotone decision problems, the Lehmann characterization returns.

Whether a source set can be reduced to one statistically superior randomized source therefore depends on how the information will be used.  An organization that can choose a new specialist for every decision problem has a different information technology from one that must contract with a common provider, maintain a single database, or use one diagnostic platform across several decision problems. The value of an information source set is determined not only by the experiments it contains, but also by the scope and timing of the decision problems those experiments must inform.

\appendix
\section{Omitted Proofs}
\label{app:omitted-proofs}

\subsection{Proof of Lemma~\ref{lem:lehmann}}
\label{app:proof-lehmann}

The rank representations of $p$ and $q$ have continuous, strictly increasing conditional distribution functions and satisfy MLRP. We verify the hypotheses of \citet[Corollary~1]{Kim2023}. For every $D=(A,u,\pi)\in\sM_0$, MLRP and single crossing imply that the largest optimal action can be chosen nondecreasing in the signal for every prior; this is the standard monotone-selection argument (see, e.g., \citealp{QuahStrulovici2009}). Thus both rank representations satisfy Kim's monotone comparative statics condition with respect to every problem in $\sM_0$.

Single crossing also gives Kim's dominated decreasing decision rule condition. Suppose $a_1\geq\cdots\geq a_m$ are assigned to increasing states. Some action in $\{a_1,\ldots,a_m\}$ weakly improves on this assignment in every state. To see this, proceed by induction. Let $a$ work for the first $m-1$ states. Since $a\geq a_m$, either $a$ also works in state $m$, or single crossing implies that $a_m$ improves on $a$ in every earlier state and hence works in all states. Finally, for every state cutoff and every $\kappa\in(0,1)$, the binary problem whose incremental payoff is $-\kappa$ below the cutoff and $1-\kappa$ at and above it belongs to $\sM_0$. Hence $\sM_0$ contains all the simple hypothesis-testing problems used in Kim's necessity argument.

It follows that $\sM_0$ is generally monotone with respect to both rank representations and satisfies the richness condition in \citet[Corollary~1]{Kim2023}. That result gives equivalence between Lehmann dominance and value dominance over all problems in $\sM_0$. Since each source is Blackwell equivalent to its rank representation, the same equivalence holds for $p$ and $q$. This completes the proof.

\subsection{Proof of Claim~\ref{clm:four-state}}
\label{app:proof-four-state}

First take a finite action set and any prior $\pi$. The monotone-selection argument in the preceding proof gives optimal actions $a_L\leq a_H$ for the two signals of $q$. If they coincide, either source can use the same constant plan. Otherwise, let $d_i=u(\omega_i,a_H)-u(\omega_i,a_L)$. Using this plan under $p_1$ rather than $q$ changes expected utility by
\[
 \frac15\pi(\omega_3)d_3+\frac1{10}\pi(\omega_4)d_4.
\]
Using it under $p_2$ rather than $q$ changes expected utility by
\[
 -\frac1{10}\pi(\omega_1)d_1-\frac15\pi(\omega_2)d_2.
\]
If $d_3\geq0$, single crossing gives $d_4\geq0$, so $p_1$ does at least as well as $q$. If $d_3<0$, single crossing gives $d_1,d_2<0$, so $p_2$ does at least as well. Optimizing under the selected source can only increase its value. This proves part~(a) for finite actions. Restricting a compact action set to the finite approximating sets in Lemma~\ref{lem:decisions}(i) preserves single crossing, and the values converge uniformly across sources. Hence part~(a) holds throughout $\sM_0$.

For part~(b), use the two problems $D_1,D_2$ displayed after the claim. For a binary-signal source $e$, their values are calculated from
\[
 U(e,D_k)=\frac14\sum_{s\in\{L,H\}}\max\left\{0,\sum_{i=1}^4e(s\mid\omega_i)u_k(\omega_i,1)\right\}.
\]
Substitution gives the value table in the text. By affinity, matching $q$ in $D_1$ requires $t\geq2/3$, whereas matching it in $D_2$ requires $t\leq1/3$. No $t$ satisfies both requirements. This completes the proof.

\subsection{Proof of Proposition~\ref{prop:deficiency}}
\label{app:proof-deficiency}

First use canonical signals. On a compact metric space, total variation also satisfies
\[
 \TV(\mu,\nu)=\sup_{\substack{v\in C(X)\\0\leq v\leq1}}\int_Xv\,d(\mu-\nu).
\]
To justify the restriction to continuous functions, use regularity of finite Borel measures to approximate a Borel set from within by a compact set and from outside by an open set, then approximate its indicator by a continuous function between zero and one. The absolute value can be omitted because $\mu-\nu$ has total mass zero: replacing $v$ by $1-v$ reverses the sign.

Total variation is therefore lower semicontinuous in the weak topology. Lemma~\ref{lem:geometry} makes $\Lambda_p(X)$ compact, so
\[
 \delta(p,q)=\min_{\lambda\in\Lambda_p(X)}\max_{\omega\in\Omega}\TV\bigl(\lambda(\cdot\mid\omega),q(\cdot\mid\omega)\bigr)
\]
has a minimizer. It is implemented by a kernel. Composing with the equivalence kernels in Lemma~\ref{lem:canonical} gives attainment on the original signal spaces as well.

For one inequality in \eqref{eq:randomization}, let $K$ attain the deficiency. Given an action plan $\alpha$ under $q$, use $\alpha\circ K$ under $p$. Its conditional payoff lies in $[0,1]$, so \eqref{eq:tv-utility} and averaging by $\pi$ imply
\[
 U(q,D)-U(p,D)\leq\delta(p,q)\quad\text{for every }D\in\sD_1.
\]

For the reverse inequality, let $g=(g_\omega)_\omega$ range over tuples of nonnegative continuous functions on $X$ satisfying $\sum_\omega\|g_\omega\|_\infty\leq1$. For $\lambda\in\Lambda_p(X)$, define
\[
 f_g(\lambda)=\sum_\omega\int_Xg_\omega(a)[q(da\mid\omega)-\lambda(da\mid\omega)].
\]
The continuous-function representation of total variation gives
\[
 \sup_g f_g(\lambda)=\max_{\omega\in\Omega}\TV\bigl(q(\cdot\mid\omega),\lambda(\cdot\mid\omega)\bigr).
\]
The upper bound allocates the available test weight across states; the lower bound puts all weight on a state attaining the maximum. The functions $f_g$ are continuous and affine and are closed under finite convex combinations. Lemma~\ref{lem:minimax} therefore gives
\[
 \delta(p,q)=\sup_g\left[\sum_\omega\int_Xg_\omega(a)q(da\mid\omega)-\max_{\lambda\in\Lambda_p(X)}\sum_\omega\int_Xg_\omega(a)\lambda(da\mid\omega)\right].
\]

For each $g$, choose a prior $\pi$ with $\pi(\omega)\geq\|g_\omega\|_\infty$ in every state, distributing any remaining mass arbitrarily. Set $u(\omega,a)=g_\omega(a)/\pi(\omega)$ when $\pi(\omega)>0$, and set it to zero otherwise. This defines a problem in $\sD_1$. In the last display, the first term is the payoff of the identity plan under $q$ and is at most $U(q,D)$; the second is $U(p,D)$. Taking suprema gives the reverse inequality in \eqref{eq:randomization}. Finally, Lemma~\ref{lem:decisions}(i) preserves the $[0,1]$ utility bound and approximates each value gap uniformly, so finite actions suffice. This completes the proof.

\section{Canonical Representation of General Experiments}
\label{app:canonical}

This appendix supplies the construction used in Section~\ref{sec:setup}.  It is included to make the treatment of arbitrary standard Borel signal spaces self-contained.

For finite measures $\nu$ and $\mu$ on the same measurable space, $\nu\ll\mu$ means that $\nu$ is \emph{absolutely continuous} with respect to $\mu$: $\mu(B)=0$ implies $\nu(B)=0$ for every measurable $B$.  Its Radon--Nikodym derivative $d\nu/d\mu$ is a measurable nonnegative function
satisfying
\[
 \nu(B)=\int_B\frac{d\nu}{d\mu}\,d\mu\quad\text{for every measurable }B.
\]
The derivative is unique outside a $\mu$-null set.

Given a source $p$, define its unconditional signal law under the reference prior by
\[
 \mu_p=\sum_\omega\bar\pi(\omega)p(\cdot\mid\omega).
\]
Its posterior map is
\[
 b_p(s)(\omega)=\bar\pi(\omega)\frac{dp(\cdot\mid\omega)}{d\mu_p}(s).
\]
The derivative exists because $p(\cdot\mid\omega)\ll\mu_p$.  The coordinates of $b_p$ sum to one almost everywhere; choose any belief on the common exceptional null set.  The posterior distribution used in the text is
\[
 \tau_p(B)=\mu_p\{s\mid b_p(s)\in B\}.
\]
This definition applies to every $B\in\sB(X)$. It satisfies the Bayes-plausibility condition $\int_X\gamma\,\tau_p(d\gamma)=\bar\pi$.

\begin{proof}[Proof of Lemma~\ref{lem:canonical}]
The deterministic posterior map garbles $p$ into $\widehat p_{\tau_p}$, since
\[
 p(b_p^{-1}(B)\mid\omega)=\frac{1}{\bar\pi(\omega)}\int_{b_p^{-1}(B)}b_p(s)(\omega)\mu_p(ds)=\widehat p_{\tau_p}(B\mid\omega).
\]
Standard Borel spaces admit regular conditional probabilities.  In particular, there is a kernel $L:X\to\Delta(S)$ giving the conditional law of the original signal given its posterior and satisfying
\[
 \int_B L(C\mid\gamma)\tau_p(d\gamma)=\mu_p\{s\in C\mid b_p(s)\in B\}
\]
for measurable $B\subseteq X$ and $C\subseteq S$.  Integrating the coordinate $\gamma(\omega)$ in this identity gives
\[
 \int_X L(C\mid\gamma)\widehat p_{\tau_p}(d\gamma\mid\omega)=\frac{1}{\bar\pi(\omega)}\int_C b_p(s)(\omega)\mu_p(ds)=p(C\mid\omega).
\]
Hence the canonical source also garbles into $p$.

For uniqueness, suppose a kernel garbles $p$ into $q$ and generate $(\omega,s,t)$ using $\bar\pi$, $p$, and that kernel.  With $B=b_p(s)$ and $C=b_q(t)$, conditional expectation gives $C=\E[B\mid t]$.  Therefore
\[
 \int_X\|\gamma\|^2\tau_p(d\gamma)-\int_X\|\gamma\|^2\tau_q(d\gamma)=\E\|B-C\|^2\geq0,
\]
where $\|\cdot\|$ is the Euclidean norm.  If a reverse garbling exists, the two integrals are equal, so $B=C$ almost surely in the first construction and $\tau_p=\tau_q$.  The converse follows from their common canonical representation.
\end{proof}

\section{Lehmann's Original Monotone Decision Problems}
\label{app:original-lehmann}

The primitive class in Section~\ref{sec:lehmann} is defined by single crossing of utility differences. \citet[Section~4]{Lehmann1988} instead starts with a nondecreasing correct-action map $a^*:\Omega\to\IR$ and takes its image $A=a^*(\Omega)$ as the action space. A nonnegative loss $L:\Omega\times A\to\IR_+$ is zero at the correct action and increases weakly as the chosen action moves away from it on either side. Precisely, $L(\omega_i,a^*(\omega_i))=0$ for every state $\omega_i$; if $a\leq b\leq a^*(\omega_i)$, then $L(\omega_i,a)\geq L(\omega_i,b)$; and if $a^*(\omega_i)\leq a\leq b$, then $L(\omega_i,a)\leq L(\omega_i,b)$. Denote by $\sM_{\mathrm{Leh}}$ the Bayesian problems with these actions and losses, $u=-L$, and arbitrary priors. Adding a constant to utility in each state does not change any source comparison. With finite $\Omega$, the restriction $A=a^*(\Omega)$ entails $|A|\leq|\Omega|$; every feasible action is correct in some state.

The loss restriction is a shape condition within each state, with correct actions ordered across states. It is not the single-crossing condition in Section~\ref{sec:lehmann}. The distinction between single crossing and quasiconcavity with increasing peaks is also emphasized by \citet{QuahStrulovici2009}. For example, the utilities
\[
 \begin{array}{c|rrr}
       &a=0&a=1&a=2\\ \hline
 \omega_1&0&-6&-7\\
 \omega_2&0&6&5\\
 \omega_3&0&1&-5\\
 \omega_4&0&1&7
 \end{array}
\]
have correct actions $(0,1,1,2)$. Subtracting utility from its statewise maximum gives losses satisfying the preceding restrictions, but the utility difference between actions two and zero is $(-7,5,-5,7)$, which violates single crossing. For the sources in Example~\ref{ex:four-state} and the uniform prior, the displayed utility normalization gives
\[
              U(q,D)=\frac45>\frac{29}{40} =U(p_1,D)=U(p_2,D).
\]
Thus that example cannot simply be reused for $\sM_{\mathrm{Leh}}$. Conversely, the state-independent utility $u(\omega_i,a)=a^2$ on $A=\{-1,0,1\}$ satisfies single crossing, but its central dip prevents a loss representation of the required shape. A separate counterexample is needed for the original class.

\begin{example}[Failure of Lifting for the Original Class]
\label{ex:original-lehmann}
Let $\Omega=\{\omega_1<\omega_2\}$ and consider
\[
 q=\frac16\begin{pmatrix}3&2&1\\1&2&3\end{pmatrix},
 \quad\quad
 p_1=\frac16\begin{pmatrix}3&3\\1&5\end{pmatrix},
 \quad\quad
 p_2=\frac16\begin{pmatrix}5&1\\3&3\end{pmatrix}.
\]
All entries are positive, and the likelihood ratios across states, read from left to right, are respectively
$(1/3,1,3)$, $(1/3,5/3)$, and $(3/5,3)$. Hence all three sources satisfy strict MLRP. Put $P=\{p_1,p_2\}$ and $Q=\{q\}$. Their posterior supports lie in the finite chains
\[
 C_P=\{(1-t,t)\mid t\in\{1/4,3/8,5/8,3/4\}\}.
\]
The benchmark chain is
\[
 C_Q=\{(1-t,t)\mid t\in\{1/4,1/2,3/4\}\}.
\]
These are compact subsets of $\operatorname{int}X$, so both sets are admissible. For every $D\in\sM_{\mathrm{Leh}}$,
\begin{equation}
            \max\{U(p_1,D),U(p_2,D)\}=U(q,D).
            \label{eq:original-primitive}
\end{equation}
Nevertheless, no tagged mixture $p_t=t p_1\oplus(1-t)p_2$ weakly outperforms $q$ throughout $\sM_{\mathrm{Leh}}$.
\end{example}

\begin{proof}
The source $p_1$ merges the second and third signals of $q$; $p_2$ merges its first and second signals. Thus $q$ Blackwell dominates each source, and the left side of \eqref{eq:original-primitive} cannot exceed its right side.

For the reverse inequality, a problem in $\sM_{\mathrm{Leh}}$ has at most two actions because $A=a^*(\Omega)$. A single action gives the same value under every source. With two actions, label them $a_1<a_2$. The loss matrix has the form
\[
       L=\begin{pmatrix}0&\ell_1\\\ell_2&0\end{pmatrix}.
\]
Here $\ell_1,\ell_2\geq0$. At a signal $s$ of $q$, choosing $a_2$ minimizes expected loss precisely when $\pi(\omega_1)\ell_1q(s\mid\omega_1)\leq\pi(\omega_2)\ell_2q(s\mid\omega_2)$. Because the likelihood ratio is increasing, ties can be resolved to obtain a deterministic threshold rule. Along the three ordered signals, its action indices are one of $(1,1,1)$, $(1,1,2)$, $(1,2,2)$, or $(2,2,2)$. The rule $(1,1,2)$ is implementable under $p_2$, and $(1,2,2)$ under $p_1$; constant rules are implementable under either. This argument also covers zero prior weights or zero losses by choosing a constant rule where appropriate. Consequently at least one source replicates an optimal rule of $q$, proving \eqref{eq:original-primitive} for every permitted loss and prior.

To rule out a single dominating tagged mixture, take the uniform prior and two problems with losses
\[
       L_1=\begin{pmatrix}0&1\\2&0\end{pmatrix}.
\]
The second loss matrix is
\[
       L_2=\begin{pmatrix}0&2\\1&0\end{pmatrix}.
\]
Both belong to the original class, with the lower action correct in state one and the higher action correct in state two. For $u_k=-L_k$, direct minimization of loss after each signal gives
\[
 \begin{array}{c|ccc}
       &U(p_1,D_k)&U(p_2,D_k)&U(q,D_k)\\ \hline
 D_1   &-5/12&-1/2&-5/12\\
 D_2   &-1/2&-5/12&-5/12
 \end{array}
\]
For completeness, the value for any two-state source $e$ is
\[
       U(e,D_k)=-\frac12\sum_s
        \min\{L_k(\omega_1,a_2)e(s\mid\omega_1),
               L_k(\omega_2,a_1)e(s\mid\omega_2)\}.
\]
The affinity of tagged mixtures now gives gaps relative to $q$ of $-(1-t)/12$ in $D_1$ and $-t/12$ in $D_2$. Matching $q$ in both problems would require $t=1$ and $t=0$ simultaneously.
\end{proof}

\begin{proposition}[No Universal Lifting for the Original Class]
\label{prop:original-no-lifting}
Even on the two-state space, no fixed binary relation between sources makes \eqref{eq:proposed-lifting} equivalent, for all finite admissible $P,Q$, to $U(P,D)\geq U(Q,D)$ for every $D\in\sM_{\mathrm{Leh}}$.
\end{proposition}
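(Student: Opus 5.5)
The argument will mirror the proof of Proposition~\ref{prop:no-lifting}, with Example~\ref{ex:original-lehmann} as the witness. Suppose, for contradiction, that some binary relation $R$ makes \eqref{eq:proposed-lifting} equivalent to $U(P,D)\geq U(Q,D)$ for all $D\in\sM_{\mathrm{Leh}}$, for every pair of finite admissible source sets on every finite ordered state space. In particular, the equivalence must hold on the two-state space.

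\textbf{Step 1: pin down $R$ on singletons.} Take admissible singleton sets $P=\{p\}$ and $Q=\{q\}$. By Lemma~\ref{lem:append}, their tagged hulls are the singletons themselves, since every lottery over a one-point set is degenerate. The equivalence therefore forces
\[
 p\,R\,q\iff U(p,D)\geq U(q,D)\text{ for every }D\in\sM_{\mathrm{Leh}}.
\]
A singleton is admissible exactly when its posterior support lies in a compact MLR chain in $\operatorname{int}X$, which is the domain on which $R$ is assumed to be defined.

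\textbf{Step 2: apply the equivalence to the example.} Take $P=\{p_1,p_2\}$ and $Q=\{q\}$ from Example~\ref{ex:original-lehmann}. Both sets are admissible, and \eqref{eq:original-primitive} gives $U(P,D)\geq U(Q,D)$ for every $D\in\sM_{\mathrm{Leh}}$. The assumed equivalence then yields some $\xi\in\Delta(P)$ with $p_\xi\,R\,q$. Because $P$ has two elements, $p_\xi=p_t$ for some $t\in[0,1]$.

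\textbf{Step 3: derive the contradiction.} The singleton $\{p_t\}$ is admissible. By Lemma~\ref{lem:append}, its posterior law is $t\tau_{p_1}+(1-t)\tau_{p_2}$, whose support lies in the finite chain $C_P\subseteq\operatorname{int}X$. Step~1 applied to the pair $(p_t,q)$ then gives $U(p_t,D)\geq U(q,D)$ for every $D\in\sM_{\mathrm{Leh}}$. This contradicts the final assertion of Example~\ref{ex:original-lehmann}: in the loss problems $L_1$ and $L_2$, the value gaps relative to $q$ are $-(1-t)/12$ and $-t/12$, and these cannot both be nonnegative.

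\textbf{Main obstacle.} The only delicate point is bookkeeping in Step~3. One must check that $\{p_t\}$ is itself an admissible finite source set within the domain of $R$, including the endpoint cases $t=0$ and $t=1$. This holds because the support of $\tau_{p_t}$ is a subset of the finite chain $C_P$ in the relative interior. One should also note that Step~1 uses only singletons, so the argument needs no property of $R$ beyond what the equivalence itself forces.
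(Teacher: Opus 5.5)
Your proof is correct and follows the paper's argument. Singleton source sets force $R$ to coincide with pairwise value dominance on $\sM_{\mathrm{Leh}}$; Example~\ref{ex:original-lehmann} then yields some $p_t\,R\,q$; and the admissibility of $\{p_t\}$, whose posterior support lies in $C_P$, turns this into dominance in both displayed loss problems, which is impossible. Since your contradiction uses only two-state instances, you can assume the equivalence only on the two-state space rather than on every finite ordered state space, which matches the ``even on the two-state space'' formulation.
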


\begin{proof}
Singleton source sets would force the relation to coincide with pairwise value dominance on $\sM_{\mathrm{Leh}}$. Applied to Example~\ref{ex:original-lehmann}, the proposed equivalence would therefore imply a single tagged mixture dominating $q$ in both problems in the displayed value table, which is impossible. Every such tagged mixture has posterior support in $C_P$, so the singleton comparisons used in this argument remain admissible.
\end{proof}

The restriction $A=a^*(\Omega)$ matters for this counterexample: it is part of the original definition, and forces two actions in the two-state case. Enlarging the class to include additional actions that are never correct defines a different question. Likewise, bundling enlarges the class. The equal bundle of the two displayed problems yields value $-11/24$ from $P$ and $-5/12$ from $q$, reversing \eqref{eq:original-primitive}. The failure of primitive comparison to lift therefore persists under the literal original definition, although a different example establishes it.

\phantomsection

\end{document}